\documentclass[11pt,tightenlines,aps,prd,notitlepage,nofootinbib,superscriptaddress]{revtex4-1}
\usepackage[utf8]{inputenc}
\usepackage[table,dvipsnames]{ xcolor}

\usepackage{multirow}
\usepackage{caption}
\usepackage{subcaption}
\usepackage{hyperref}
\usepackage{graphicx}
\usepackage{mathtools}
\usepackage{amsmath,amssymb,amsfonts,amsthm,latexsym,stmaryrd,physics,mathrsfs}
\numberwithin{figure}{section}
\allowdisplaybreaks[4]
\usepackage{marginnote}
\usepackage{color}
\usepackage{bm}
\usepackage{float}
\usepackage{nicefrac}
\usepackage{microtype} 
\usepackage{booktabs}
\usepackage{verbatim}
\usepackage{setspace}
\usepackage[T1]{fontenc}
\usepackage[utf8]{inputenc}
\usepackage{stfloats}
\usepackage{diagbox}
\usepackage{dsfont}
\usepackage{tikz}
\usepackage{geometry}

\hypersetup{
	colorlinks=true
,urlcolor=blue
,anchorcolor=blue
,citecolor=blue
,filecolor=blue
,linkcolor=blue
,menucolor=blue
,pagecolor=blue
,linktocpage=true
,pdfproducer=medialab
}

\def\beq{\begin{equation}}
\def\eeq{\end{equation}}
\newcommand{\bea}{\begin{eqnarray}}
\newcommand{\eea}{\end{eqnarray}}
\def\bi{\begin{itemize}}
\def\ei{\end{itemize}}
\def\ba{\begin{array}}
\def\ea{\end{array}}
\def\bfig{\begin{figure}}
\def\efig{\end{figure}}

\newtheorem{theorem}{Theorem}[section]

\newtheorem{lemma}[theorem]{Lemma}
\newtheorem*{theorem1}{Main Result}

\def\be{\begin{eqnarray}}
\def\ee{\end{eqnarray}}

\usepackage{geometry}
\begin{document}

\title{Beyond Expectation Values: Generalized Semiclassical Expansions for Matrix Elements of Gauge Coherent States
}

\author{Haida Li}
\email{eqwaplay@scut.edu.cn}
\affiliation{School of Physics and Optoelectronics, South China University of Technology, Guangzhou 510641, China}
\affiliation{Institute for Theoretical Sciences and Department of Physics, Westlake University, Hangzhou 310024, Zhejiang, China}

\author{Hongguang Liu} 
\email{Corresponding author: liuhongguang@westlake.edu.cn}
\affiliation{Institute for Theoretical Sciences and Department of Physics, Westlake University, Hangzhou 310024, Zhejiang, China}

\begin{abstract}

We derive an asymptotic expansion for off-diagonal coherent-state matrix elements of non-polynomial operators in gauge theories admitting holomorphic coherent-state representations. The derivation combines stationary-phase analysis with an operator-level treatment of the Taylor remainder, and yields explicit semiclassical error control under stated assumptions. As a primary application, we formulate the expansion for volume and flux related operators in Loop Quantum Gravity and compare it with the standard diagonal expansion proposed in \cite{Giesel:2006um}. By organizing the expansion around the genuine off-diagonal Berezin symbol rather than a diagonal expectation value, the resulting formula preserves the full holomorphic structure of the geometric phase and reproduces benchmark matrix elements accurately in the numerical regimes tested here, particularly when the coherent-state labels are well separated.

\end{abstract}

\maketitle

\section{Introduction}

Coherent states provide a semiclassical representation in which quantum dynamics can be related directly to classical phase-space data \cite{Schrodinger:1926zz}. In gauge theories, they are particularly useful because they furnish both an overcomplete resolution of identity and a natural notion of off-diagonal Berezin symbols \cite{klauder1985coherent,gerry2023introductory,Berezin:1974du}. The present work is concerned with the computation of such off-diagonal matrix elements for non-polynomial operators.

Based on the coherent state framework, the coherent state path integral \cite{negele2018quantum,nagaosa2013quantum,dupuis2023field,Kochetov:2018uhs} allows for the description of quantum dynamics directly on the classical phase space. The quantum amplitude is recast as a sum over classical trajectories, where the weighting is determined by an effective action that captures the underlying quantum commutation relations \cite{Klauder:1979gi,Faddeev:1969su}. The mathematical rigor of this approach is deeply rooted in the Berezin quantization scheme and Geometric Quantization, utilizing Berezin symbols to resolve ordering ambiguities and defining the Kähler polarization via the coherent state overlap \cite{1974IzMat...8.1109B,Berezin:1974du}. It must be emphasized that the rigorous evaluation of the off-diagonal matrix elements between distinct coherent states constitutes the primary motivation for such formalisms. In the coherent-state path integral, one repeatedly encounters short-time matrix elements of the form $\langle z_{k+1}|e^{-i\epsilon \widehat H}|z_k\rangle$, and therefore needs control over genuinely off-diagonal coherent-state symbols. In the strict continuous-time limit, diagonal expectation values may suffice at a formal level, but this is no longer adequate once one works with finite time steps or wishes to retain the full off-diagonal holomorphic structure. The latter is also relevant for analytic continuation, although that aspect will only serve here as a motivation rather than as a developed application.

The application of these geometric formalisms to gauge theories relies inherently on the construction of gauge group coherent states. A prominent foundational example is the Hall coherent state developed for Yang-Mills (YM) theories \cite{HALL1994103}, which was subsequently generalized into the Thiemann complexifier coherent state \cite{Thiemann:2000bw,Thiemann:2000ca,Thiemann:2000bx,Thiemann:2000by} for the study of Loop Quantum Gravity (LQG) \cite{Thiemann:2007pyv,Rovelli:2014ssa,Ashtekar:2017yom,Perez:2012wv}. By employing the reduced phase space quantization scheme \cite{giesel2010algebraic,thiemann2006reduced,Brown:1994py,Husain:2011tk}, canonical quantum gravity can be recast into a deparametrized form that is structurally analogous to standard non-perturbative gauge field theories. In this framework, the physical time evolution is well-defined, allowing for the direct implementation of the coherent state path integral to evaluate transition amplitudes in quantum gravity \cite{Han:2019vpw,Han:2019feb,Han:2020iwk,Han:2020chr,Bodendorfer:2020ovt,Han:2020uhb,Han:2021cwb}.

In background-independent gauge theories such as LQG, the relevant geometric observables and Hamiltonians are typically non-polynomial functions of the basic holonomy and flux variables. This is not a matter of convenience but a structural feature of theory: for example, the volume operator \cite{ Rovelli:1994ge,Ashtekar:1997fb,Brunnemann:2004xi} is defined through fractional powers of polynomial flux combinations. As a result, exact coherent-state matrix elements of these operators are generally difficult to obtain, even when the action of the underlying polynomial operators is explicitly known \cite{Brunnemann:2007ca,DePietri:1996tvo}.

To formulate a valid coherent state path integral, one must explicitly compute the off-diagonal matrix elements $\langle z | f(\hat{A}) | z' \rangle$ of these non-polynomial operators between distinct states. Historically, semiclassical treatments in LQG often rely on expansions organized around diagonal expectation values, most notably in the Giesel-Thiemann framework \cite{Giesel:2006um}. Such formulas are adequate for the strict continuous-time limit, where neighboring coherent states are infinitesimally close. However, once one keeps a finite discretization or wishes to preserve the full off-diagonal holomorphic structure, a purely diagonal expansion becomes structurally restrictive. The issue is not simply the size of the correction, but the choice of the expansion center itself: a diagonal expectation value does not retain the full geometric information carried by the genuine off-diagonal Berezin symbol.

To resolve this problem, in this work we derive an asymptotic expansion for off-diagonal coherent-state matrix elements of operators of the form $(\widehat A^2)^q$, organized around the off-diagonal Berezin symbol $C(z,z')=\langle z|\widehat A|z'\rangle/\langle z|z'\rangle$. The derivation combines a stationary-phase analysis of the coherent-state overlap with an operator-level treatment of the Taylor remainder, which together yield an explicit semiclassical error estimate under stated assumptions. We then apply the formalism to the volume-related operator sector of LQG and compare the resulting expansion both with the standard diagonal-centered formula and with independently computed numerical benchmark data.

The paper is organized as follows. In Sec. \ref{sectionII} we first review the coherent-state framework and formulate the off-diagonal Berezin-symbol problem. We then establish the key semiclassical estimates for shifted operator powers and prove the main asymptotic expansion formula. In Sec. \ref{sectionIV} we apply the general result to LQG coherent states, verify the required assumptions for the relevant flux and volume operators, and compare the new formula with both the standard LQG expansion and numerical benchmark calculations. We conclude in Sec. \ref{sectionV} with a discussion and outlook.

\section{General Expansion Formula}\label{sectionII}

We now set up the coherent-state notation needed for the asymptotic expansion. The central object is the off-diagonal Berezin symbol of a non-polynomial operator built from a polynomial operator $\widehat A$. Our goal in this section is to isolate the semiclassical mechanism that controls such symbols and to formulate the corresponding expansion formula.

\subsection{Berezin symbol and general coherent states}

To fix notation, we begin with canonical coherent states. The subsequent formulas will then be stated in a form suitable for more general coherent-state families with a Kähler-type overlap structure.

We use the convention $[a,a^\dagger]=1$ and label coherent states by a classical variable $z=\left(z_1, \ldots, z_n\right)$ such that
\begin{equation}
a|z\rangle=\frac{z}{\sqrt{\hbar}}|z\rangle.
\end{equation}
Here $z$ carries the dimension of a classical phase space coordinate. 
Then
\begin{equation}\label{defz}
|z\rangle=\exp\left(\frac{z a^\dagger-\bar z a}{\sqrt{\hbar}}\right)|0\rangle
=\exp\left(-\frac{|z|^2}{2\hbar}\right)\exp\left(\frac{z a^\dagger}{\sqrt{\hbar}}\right)|0\rangle,
\end{equation}
and therefore
\begin{equation}\label{prop1}
\langle z'|z\rangle
=\exp\left(\frac{\bar z' z-\frac12|z'|^2-\frac12|z|^2}{\hbar}\right).
\end{equation}

More generally, for a coherent-state family with Kähler-type overlap, we write
\begin{equation}
    \langle z'|z\rangle
=\exp\left(\frac{K(z,\bar z')-\frac12K(z,\bar z)-\frac12K(z',\bar z')}{\hbar}\right),
\end{equation}
where $K(z,\bar{z}')$ is the Kähler potential. In this simple case, the Kähler potential is the bilinear form:
\begin{equation}\label{Kpotential}
    K(z,\bar{z}')\equiv\bar{z}^{\prime} z,
\end{equation}
and $\frac{1}{2}K(z,\bar{z})$ and $\frac{1}{2}K(z',\bar{z}')$ come from normalization. For a more general coherent-state family, the Kähler potential $K(z,\bar{z}')$ can be a more complicated function of $z$ and $\bar{z}'$, but it still satisfies the same overlap structure with a corresponding resolution of identity:
\begin{equation}\label{rI}
\mathbb{I}=\int dv(z)|z\rangle\langle z|,
\end{equation}
Here $dv(z)$ denotes the measure associated with the overcomplete coherent-state family.

The expectation value of an operator $\widehat{O}$ in a coherent state $|z\rangle$ is:
\begin{equation}
\frac{\langle z| \hat{O}|z\rangle}{\langle z| z\rangle}=O_{\operatorname{diag}}(z, \bar{z}),
\end{equation}
which is a real-valued function of $z$ and its complex conjugate $\bar{z}$. Since coherent states are constructed to be holomorphic in $z$ and anti-holomorphic in $\bar{z}$, one can often analytically continue $O_{\operatorname{diag}}(z, \bar{z})$ to a function of two independent complex variables:
\begin{equation}
    O(z, z')=\frac{\left\langle z\right| \hat{O}|z^{\prime}\rangle}{\langle z|z^{\prime}\rangle}.
\end{equation}
This function $O(z, z')$ is called the Berezin symbol of the operator $\widehat{O}$, which is holomorphic in $z'$ and anti-holomorphic in $z$. Note that for the Berezin symbol we do not need to use normalized coherent states.

 In a quantized gauge field theory, the physical dynamics at the semi-classical level are manifested by the computation of the transition amplitude between initial and final coherent states:
\begin{equation}\label{transA}
\left\langle z_f\right| e^{-i \hat{H} t}\left|z_i\right\rangle.
\end{equation}

To compute the transition amplitude (\ref{transA}), usually we need to adopt the coherent state path integral formalism by considering the time slice to divide $t$ into $N$ intervals: $\epsilon:=\frac{t}{N}$. Inserting the resolution of identities into (\ref{transA}), we have:
\begin{equation}\label{cpath}
\left\langle z_f\right| e^{-i \hat{H} t}\left|z_i\right\rangle=\int \prod_{k=1}^{N-1} d v(z_k)\left\langle z_N\right| e^{-i \epsilon \hat{H}}\left|z_{N-1}\right\rangle \cdots\left\langle z_1\right| e^{-i \epsilon \hat{H}}\left|z_0\right\rangle,
\end{equation}
where $z_N=z_f, z_0=z_i$. Each small slice (for small $\epsilon$) can be estimated as:
\begin{equation}
\left\langle z_{k+1}\right| e^{-i \epsilon \hat{H}}\left|z_k\right\rangle \approx\left\langle z_{k+1}\right|(1-i \epsilon \hat{H})\left|z_k\right\rangle=\left\langle z_{k+1} \mid z_k\right\rangle-i \epsilon\left\langle z_{k+1}\right| \hat{H}\left|z_k\right\rangle.
\end{equation}
This elementary time-slicing step already shows why off-diagonal symbols are unavoidable: each short-time factor is evaluated between neighboring but distinct coherent states. The problem addressed in this paper is therefore not an optional refinement of the diagonal formalism, but part of the basic discrete coherent-state path-integral construction.

To manifest the general microlocal machinery mentioned above in a physically relevant context, we focus on a prototypical non-polynomial structure ubiquitous in LQG (e.g., associated with the quantum volume operator): the fractional power operator $(\widehat{A}^2)^q$ with $0 < q \leq \frac{1}{4}$. Rather than attempting a direct evaluation, the target operator $(\widehat{A}^2)^q$ can be formally expanded by inserting the overcomplete resolution of identity (\ref{rI}):
\begin{equation}\label{zexpand111}
\begin{split}
(\widehat{A}^2)^q&=\int dv(z_1)dv(z_2)\langle z_1 |(\widehat{A}^2)^q|z_2 \rangle| z_1\rangle\langle z_2|\\
&=\int dv(z_1)dv(z_2)[(\widehat{A}^2)^q](z_1 ,z_2)\langle z_1|z_2\rangle | z_1\rangle\langle z_2|,
\end{split}
\end{equation}
where $[(\widehat{A}^2)^q](z_1,z_2)$ is the analytically continued Berezin symbol of the specific operator. The same strategy will later be seen to extend to a broader analytic class $f(\widehat{A})$, but the derivation is most transparent in this representative case. 

\subsection{ Asymptotic estimate for shifted operator powers}\label{sectionIII}
In the applications below, the coherent-state labels are complex. The stationary-phase argument is used locally, after resolving the complex labels into their underlying real coordinates near the relevant non-degenerate saddle.
We then introduce the theorem of saddle point expansion \cite{hormander2015analysis}:
\begin{theorem}[Hörmander’s theorem 7.7.5]\label{theorem111}
Let K be a compact subset in $\mathbb{R}^n$, X an open neighborhood of K, and k a positive integer. If: (1) the complex functions $u \in C_0^{2 k}(K)$, $f \in C^{3 k+1}(X)$ and $\operatorname{Im} f(x) \geq 0 \quad\forall x\in X$, (2) there is a unique point $x_0 \in K$ satisfying $\operatorname{Im}\left(S\left(x_0\right)\right)=0$, $f^{\prime}\left(x_0\right)=0$, $\operatorname{det}\left(f^{\prime \prime}\left(x_0\right)\right) \neq 0$ (where $f''$ denotes the Hessian matrix), and $f^{\prime}(x) \neq 0 \quad\forall x\in K \backslash\left\{x_0\right\}$, then we have the following estimation:
\begin{equation}\label{saddle1}
\left|\int_K u(x) e^{i \lambda f(x)} d x-e^{i \lambda f \left(x_0\right)}\left[\operatorname{det}\left(\frac{\lambda f^{\prime \prime}\left(x_0\right)}{2 \pi i}\right)\right]^{-\frac{1}{2}} \sum_{s=0}^{k-1}\left(\frac{1}{\lambda}\right)^s L_s u\left(x_0\right)\right| \leq C\left(\frac{1}{\lambda}\right)^k \sum_{|\alpha| \leq 2 k} \sup \left|D^\alpha u\right| .
\end{equation}
Here the constant $C$ is bounded when $f$ stays in a bounded set in $C^{3 k+1}(X)$. We use the standard multi-index notation $\alpha=\left\langle\alpha_1, \ldots, \alpha_n\right\rangle$ and:
\begin{equation}
D^\alpha=(-i)^|\alpha| \frac{\partial^{|\alpha|}}{\partial x_1^{\alpha_1} \ldots \partial x_n^{\alpha_n}}, \quad \text { where } \quad|\alpha|=\sum_{i=1}^n \alpha_i,
\end{equation}
 $L_s u\left(x_0\right)$ is defined as:
\begin{equation}\label{LSU}
L_s u\left(x_0\right)=i^{-s} \sum_{l-m=s} \sum_{2 l \geq 3 m} \frac{(-1)^l 2^{-l}}{l ! m !}\left[\sum_{a, b=1}^n H_{a b}^{-1}\left(x_0\right) \frac{\partial^2}{\partial x_a \partial x_b}\right]^l\left(g_{x_0}^m u\right)\left(x_0\right),
\end{equation}
where $H(x)=f^{\prime \prime}(x)$ denotes the Hessian matrix and the function $g_{x_0}(x)$ is given by:
\begin{equation}
g_{x_0}(x)=f(x)-f\left(x_0\right)-\frac{1}{2} H^{a b}\left(x_0\right)\left(x-x_0\right)_a\left(x-x_0\right)_b,
\end{equation}
satisfying $g_{x_0}\left(x_0\right)=g_{x_0}^{\prime}\left(x_0\right)=g_{x_0}^{\prime \prime}\left(x_0\right)=0$. For each $s$, $L_s$ is a differential operator of order $2s$ acting on $u(x)$.
\end{theorem}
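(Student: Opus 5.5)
This is Hörmander's Theorem 7.7.5, which the paper imports rather than proves. The plan is the standard one from Hörmander's book: localize, reduce to a Gaussian integral, and control the cubic remainder. The quantity $\operatorname{Im}S(x_0)$ in the hypothesis should be read as $\operatorname{Im}f(x_0)=0$.

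\emph{Step 1: localization away from $x_0$.} Choose a cutoff $\chi\in C_0^\infty$ that equals $1$ near $x_0$, and split $u=\chi u+(1-\chi)u$. On the support of $(1-\chi)u$ we have $f'\neq 0$ and $\operatorname{Im}f\ge 0$, so we use the non-stationary-phase estimate with a complex phase (Hörmander 7.7.1). Integrating by parts with the operator $\sum_j \overline{\partial_j f}\,\partial_j/(i\lambda(|f'|^2+\operatorname{Im}f))$ gives a bound $C\lambda^{-k}\sum_{|\alpha|\le k}\sup|D^\alpha u|$. The key point is that the weight $|f'|^2+\operatorname{Im}f$ is bounded below on that compact set. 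This bound depends only on finitely many derivatives of $f$, which explains why $C$ stays bounded when $f$ ranges over a bounded set in $C^{3k+1}$.

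\emph{Step 2: the quadratic model near $x_0$.} Write $f=f(x_0)+\tfrac12\langle H(x-x_0),x-x_0\rangle+g_{x_0}$. Since $\operatorname{Im}f\ge0$ with a minimum at $x_0$, we get $\operatorname{Im}H\ge0$, and $\det H\neq0$ by hypothesis. For the pure quadratic phase, Parseval and the Fourier transform of the complex Gaussian give the exact identity
\[\int v\,e^{i\lambda\langle Hy,y\rangle/2}dy=\Big[\det\Big(\tfrac{\lambda H}{2\pi i}\Big)\Big]^{-1/2}\Big(e^{-\frac{i}{2\lambda}\langle H^{-1}D,D\rangle}v\Big)(0).\]
Taylor expanding the exponential on the right to order $k$ gives the main terms. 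The remainder is bounded by $\lambda^{-k}$ times $L^1$ norms of $\hat v$ weighted by $|\xi|^{2k}$, and hence, via Sobolev, by derivatives of $v$. The branch of the square root is fixed by continuity from the case $H=i\,\mathrm{Id}$.

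\emph{Step 3: the non-quadratic part, which is the main obstacle.} Set $v=\chi u\,e^{i\lambda g_{x_0}}$ and expand $e^{i\lambda g}=\sum_{m<M}(i\lambda g)^m/m!+R_M$. Since $g=O(|x-x_0|^3)$, each factor $g^m$ vanishes to order $3m$. In the $l$-th term of the Gaussian expansion, applying $\langle H^{-1}D,D\rangle^l$ to $g^m u$ and evaluating at $x_0$ vanishes unless $2l\ge3m$. Each such term carries the power $\lambda^{m-l}$, which produces exactly the combinatorics of $L_s$ with $s=l-m$.

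The delicate part is the remainders. The naive bound on $e^{i\lambda g}$ involves growing powers of $\lambda$. To handle this, I would first estimate $\int|y|^{j}e^{-\lambda\operatorname{Im}\langle Hy,y\rangle/2}\cdots$ by $O(\lambda^{-j/2-n/2})$, using that the integrand is concentrated where $|y|\lesssim\lambda^{-1/2}$. If $\operatorname{Im}H$ is degenerate, this concentration fails, so I would follow Hörmander and use the parameter-dependent identity $F(t)=\int u\,e^{i\lambda(\langle Hy,y\rangle/2+tg)}dy$ and Taylor expand in $t$. The derivatives $F^{(m)}$ are treated by Step 2 with weight $g^m$ vanishing to order $3m$, which yields the bound $\lambda^{-k}$ after $3k+1$ derivatives of $f$ and $2k$ derivatives of $u$.

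Finally, summing the contributions from Steps 1–3 gives \eqref{saddle1}.
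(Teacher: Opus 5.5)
The paper gives no proof of this statement; it imports Hörmander's Theorem 7.7.5. Your outline follows Hörmander's route: localize, evaluate the Gaussian model by Fourier transform, interpolate $f_t=f(x_0)+\tfrac12\langle Hy,y\rangle+t\,g_{x_0}$, and Taylor-expand in $t$. As written, though, the decisive step is missing.

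\textbf{The Taylor remainder in $t$.} Step 2 only evaluates $F^{(m)}(0)$. The Taylor remainder needs a bound on $\sup_{0\le t\le1}|F^{(2k)}(t)|$, where
\[
F^{(2k)}(t)=(i\lambda)^{2k}\int u\,g^{2k}e^{i\lambda f_t}\,dy .
\]
Here the phase is not quadratic and has a critical point at $y=0$. So neither Step 2 nor your Step 1 applies, since Step 1 needs $f'$ bounded away from $0$. What is needed is the \emph{weighted} non-stationary phase estimate (Hörmander's Theorem 7.7.1 in full):
\[
\lambda^N\Big|\int w\,e^{i\lambda\phi}\,dy\Big|\le C\sum_{|\alpha|\le N}\sup\Big(|D^\alpha w|\,\big(|\phi'|^2+\operatorname{Im}\phi\big)^{|\alpha|/2-N}\Big).
\]
Its proof uses $|(\operatorname{Im}\phi)'|^2\le 2\operatorname{Im}\phi\,\sup|(\operatorname{Im}\phi)''|$. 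As an aside, your operator $\sum_j\overline{\partial_jf}\,\partial_j/(i\lambda(|f'|^2+\operatorname{Im}f))$ sends $e^{i\lambda f}$ to $\frac{|f'|^2}{|f'|^2+\operatorname{Im}f}e^{i\lambda f}$, not to $e^{i\lambda f}$; where $f'\neq0$, just use $|f'|^2$.

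Apply the weighted estimate with $\phi=f_t$, $w=u\,g^{2k}$ (which vanishes to order $6k$) and $N=3k$. Use the two facts $\operatorname{Im}f_t=(1-t)\tfrac12\operatorname{Im}\langle Hy,y\rangle+t\operatorname{Im}f\ge0$ and $|f_t'(y)|\ge c|y|$, uniformly in $t$ on a small ball. Every term is then bounded, and $|F^{(2k)}(t)|\le C\lambda^{2k-3k}=C\lambda^{-k}$. This, not Step 2, is where $f\in C^{3k+1}$ enters.

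\textbf{Derivative count.} Your route does not give the bound $\sum_{|\alpha|\le2k}\sup|D^\alpha u|$. Controlling $\|\,|\xi|^{2k}\hat v\|_{L^1}$ costs $2k+s$ derivatives with $s>n/2$. The $3k$ integrations by parts on $u\,g^{2k}$ cost $3k$ derivatives of $u$. Also, $g$ is only $C^{3k+1}$, while Step 2 applied to $g^m u$ with $m\le 2k-1$ needs more derivatives of $g$ than that.

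The standard repair is to write $u=u_0+u_1$, with $u_0$ a cutoff times the Taylor polynomial of degree $2k-1$ at $x_0$, and $u_1$ vanishing to order $2k$. The weighted estimate with $N=k$ bounds $\int u_1e^{i\lambda f}$ by $C\lambda^{-k}\sum_{|\alpha|\le2k}\sup|D^\alpha u|$, and $L_su_1(x_0)=0$ for $s<k$. For $u_0$, derivative losses cost nothing. Split $g$ likewise into its degree-$3k$ Taylor polynomial plus an $O(|y|^{3k+1})$ remainder before using the Gaussian formula.

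\textbf{Uniformity clause.} This clause cannot be proved as stated. Your claim that the Step 1 constant depends only on finitely many derivatives of $f$ is false: it depends on $\inf|f'|$ away from $x_0$ and, near $x_0$, on $\|H^{-1}\|$. Hörmander additionally assumes that $|x-x_0|/|f'(x)|$ is uniformly bounded on $K$; the paper's transcription omits this. For example, take $f_\epsilon'(x)=x\big((x-1)^2+\epsilon\big)$ on $K=[-3,3]$. This family stays bounded in every $C^m$ and has a unique nondegenerate critical point. Yet for fixed $u$ supported near $x=1$ with $u(1)\neq0$, the limit $\epsilon\to0$ gives decay only like $\lambda^{-1/3}$, so no uniform $C$ exists.
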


In order to prove this relation, we first prove the following lemma regarding the asymptotic evaluation of intermediate operator insertions:

\begin{lemma}\label{lemma1}
    Let $\widehat{F}$ be a polynomial operator on $\mathcal{H}$, and its Berezin symbol $\frac{\langle z|\widehat{F}|z_i\rangle}{\langle z|z_i\rangle}$ be a smooth, slowly varying function of $z$ and $z_i$ that is independent of the large parameter $1/\hbar$ in the exponent. The following approximation holds:
    \begin{eqnarray}\label{lemma1eq}
        \frac{\langle z|\widehat{F}|z'\rangle }{\langle z|z'\rangle } = \frac{\langle z|\widehat{F}|\tilde{z}\rangle}{\langle z|\tilde{z}\rangle} \left(1 +  \mathcal{O}(\hbar) \right)= \frac{\langle\tilde{z}|\widehat{F}|z'\rangle}{\langle \tilde{z}|z'\rangle} \left(1 +  \mathcal{O}(\hbar) \right),
    \end{eqnarray}
    where $\tilde{z}$ is the isolated single dominant critical point determined strictly by the geometric phase of the coherent state overlap, independent of $\widehat{F}$.
\end{lemma}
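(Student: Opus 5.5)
The plan is to write the off-diagonal symbol as an integral over the overcomplete family using the resolution of identity (\ref{rI}), and then evaluate that integral with Theorem \ref{theorem111}. Inserting $\mathbb{I}=\int dv(w)|w\rangle\langle w|$ next to $\widehat F$ gives
\begin{equation}
\frac{\langle z|\widehat F|z'\rangle}{\langle z|z'\rangle}=\int dv(w)\,\frac{\langle z|w\rangle\langle w|z'\rangle}{\langle z|z'\rangle}\,\frac{\langle w|\widehat F|z'\rangle}{\langle w|z'\rangle}.
\end{equation}
Using the Kähler form of the overlap, the weight equals $\exp\big(\Phi(w,\bar w;z,\bar z')/\hbar\big)$ with
\begin{equation}
\Phi=K(w,\bar z)+K(z',\bar w)-K(w,\bar w)-K(z',\bar z).
\end{equation}
The measure's density is $\hbar$-independent up to normalisation. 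This weight does not involve $\widehat F$. The factor $\langle w|\widehat F|z'\rangle/\langle w|z'\rangle$ is, by hypothesis, smooth, slowly varying and $\hbar$-independent, so it plays the role of the amplitude $u$.

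The first step is to locate the critical point of $\Phi$ in $w$. We resolve $w$ into real coordinates, or equivalently treat $w$ and $\bar w$ as independent variables. Stationarity then gives $\partial_{\bar w}K(w,\bar w)=\partial_{\bar w}K(z',\bar w)$ and $\partial_w K(w,\bar w)=\partial_w K(w,\bar z)$. In the canonical case (\ref{Kpotential}) these reduce to $w=z'$ and $\bar w=\bar z$. The critical point $\tilde z$ is therefore the complexified point with holomorphic label $z'$ and antiholomorphic label $\bar z$. In general it is fixed by the Kähler potential alone, which is why the statement says it is determined "strictly by the geometric phase" and is independent of $\widehat F$.

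Next we check the hypotheses of Theorem \ref{theorem111} with $\lambda=1/\hbar$ and $f=-i\Phi$:
\begin{itemize}
\item $\mathrm{Im} f=-\mathrm{Re}\,\Phi\ge0$ follows from $|\langle z|w\rangle\langle w|z'\rangle|\le$ the overlap bound, i.e. from Cauchy--Schwarz-type positivity of $K$.
\item The Hessian at the critical point is $\partial_w\partial_{\bar w}K$, the Kähler metric, which is nondegenerate.
\item Uniqueness and isolation of the critical point follow from strict convexity of $K$.
\end{itemize}
The region away from a compact neighbourhood $K$ of the saddle is cut off with a smooth partition of unity. That contribution is exponentially small in $1/\hbar$, because $\mathrm{Re}\,\Phi$ is strictly negative there.

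The theorem with $k=1$ then gives the integral as $e^{\Phi(\tilde z)/\hbar}[\det(\cdots)]^{-1/2}\,u(\tilde z)\,(1+\mathcal O(\hbar))$. The same computation with $\widehat F=\mathbb I$ normalises the prefactor: the left side is then $1$, so $e^{\Phi(\tilde z)/\hbar}\det^{-1/2}=1+\mathcal O(\hbar)$. Dividing the two expansions yields the first equality, $\frac{\langle z|\widehat F|z'\rangle}{\langle z|z'\rangle}=\frac{\langle \tilde z|\widehat F|z'\rangle}{\langle\tilde z|z'\rangle}(1+\mathcal O(\hbar))$, with $u(\tilde z)$ read through analytic continuation of the symbol. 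For the second equality we insert the identity on the other side of $\widehat F$ and take $\langle z|\widehat F|w\rangle/\langle z|w\rangle$ as the amplitude; the phase and the saddle are the same. Along the way we note that $u(\tilde z)$ equals both one-sided symbols evaluated at $\tilde z$, so the two forms agree.

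I expect the main obstacle to be the complex saddle. In general $\tilde z$ is not a real point of the integration domain: it has $w=z'$ but $\bar w=\bar z$, and these are not conjugate when $z\neq z'$. Theorem \ref{theorem111} as stated needs a real $x_0$. The first thing I would try is to deform the contour, or to invoke the almost-analytic extension version of Hörmander's theorem, using holomorphy of the amplitude in $w$ and antiholomorphy in $\bar w$. For a bounded separation $|z-z'|$ I would justify this by the exponential decay of $\Phi$. I would also make sure the $\mathcal O(\hbar)$ remainder is relative, which requires $u(\tilde z)\neq0$. Otherwise I would state the estimate additively, with the error controlled by $\sup|D^\alpha u|$ for $|\alpha|\le2$.
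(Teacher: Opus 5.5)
Your argument follows the same route as the paper's proof. The paper inserts the resolution of identity to the right of $\widehat F$ and uses $\langle z|\widehat F|z_i\rangle/\langle z|z_i\rangle$ as the amplitude; you do the mirror image. Both then apply Theorem~\ref{theorem111} with the $\widehat F$-independent overlap phase, compute $\langle z|z'\rangle$ by the same integral, and divide. The paper applies the theorem at $\tilde z$ without comment. So the obstacle you raise at the end, that $\tilde z$ is not a real point of the integration domain when $z\neq z'$, is a weakness of the paper's argument as much as of yours.

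One of your steps is wrong, and it is the same phenomenon. The bullet $\mathrm{Im} f=-\mathrm{Re}\,\Phi\ge0$ requires $|\langle z|w\rangle\langle w|z'\rangle|\le|\langle z|z'\rangle|$, and this inequality fails. In the canonical case $\Phi=-(\bar w-\bar z)(w-z')$, hence
\[
\mathrm{Re}\,\Phi=\tfrac14|z-z'|^2-\bigl|w-\tfrac12(z+z')\bigr|^2 .
\]
So the weight $e^{\Phi/\hbar}$ is exponentially large on a disc around the midpoint, and the $\mathcal O(1)$ answer arises only through cancellation. Theorem~\ref{theorem111} therefore cannot be applied on the real domain as stated.

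The obstacle disappears under your own reading of $\tilde z$ as the complexified point with holomorphic coordinate $z'$ and antiholomorphic coordinate $\bar z$. The Berezin symbol is antiholomorphic in its left label and holomorphic in its right one. Hence $\langle w|\widehat F|z'\rangle/\langle w|z'\rangle$ depends on $w$ only through $\bar w$, and its value at $\tilde z$ is literally $\langle z|\widehat F|z'\rangle/\langle z|z'\rangle$. Likewise $\langle z|\widehat F|w\rangle/\langle z|w\rangle$ depends only on $w$, and at $\tilde z$ it equals the left-hand side. Both equalities therefore hold exactly, with zero error, by (anti)holomorphy alone. You need no stationary phase, no contour deformation, no almost-analytic extension, and no condition $u(\tilde z)\neq0$. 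In the canonical case this is the exact identity $\int dv(w)\,e^{-(\bar w-\bar z)(w-z')/\hbar}\,g(\bar w)=g(\bar z)$, which is just the resolution of identity.

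This reading is also forced. Suppose $\tilde z$ were a genuine real phase-space point, as the paper's LQG discussion (``on the classical geodesic'') suggests. Then the lemma would fail already at order $\hbar^0$. For $\widehat F=\sqrt{\hbar}\,a$ one has $\langle z|\widehat F|w\rangle/\langle z|w\rangle=w$, which equals $z'$ only at $w=z'$. For $\widehat F=\sqrt{\hbar}\,a^\dagger$ the other form forces $\bar w=\bar z$.
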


The above lemma will be proved directly by a saddle-point analysis of the coherent-state resolution of identity. Before entering the proof, however, it is useful to record a heuristic consistency check with the standard semiclassical picture of Berezin-Toeplitz quantization. In that picture, the intermediate coherent-state label is localized near the geometric saddle selected by the overlap phase, so that the corresponding off-diagonal symbol is evaluated near the classical trajectory connecting the endpoints. This observation is intended only as an interpretation of the lemma and will not be used in the proof below.

What is actually needed in the proof is the following technical point: the amplitude function
\begin{equation}
u_\hbar(z,z_i)\equiv \frac{\langle z|\widehat{F}|z_i\rangle}{\langle z|z_i\rangle}
\end{equation}
is semiclassically slow compared with the exponential phase coming from the coherent-state overlap. More precisely, for polynomial operators $\widehat{F}$ built from the basic holomorphic variables and $\hbar$-weighted differential operators, the derivatives of $u_\hbar$ required in the stationary-phase expansion remain of order $\mathcal{O}(1)$ in the limit $\hbar\to 0$, and $u_\hbar$ contains no additional factors of the form $e^{c/\hbar}$.

To see this, recall that in the holomorphic representation of gauge coherent states on a K\"ahler manifold, the fundamental operators act by coordinate multiplication and by $\hbar$-weighted derivatives, schematically of the form
\begin{equation}
\hat{f}\sim f(z_i)+\hbar \partial_{z_i}.
\end{equation}
The coherent-state overlap takes the form
\begin{equation}
\langle z|z_i\rangle =v(\bar{z},z_i)e^{K(\bar{z}, z_i)/\hbar},
\end{equation}
where $K$ is the K\"ahler potential and $v(\bar{z},z_i)$ is independent of the exponential semiclassical scale. Therefore,
\begin{equation}
\hbar \partial_{z_i} \langle z|z_i\rangle=
v(\bar{z},z_i)\hbar \partial_{z_i} e^{\frac{1}{\hbar}K(\bar{z},z_i)}
+
e^{\frac{1}{\hbar}K(\bar{z},z_i)}\hbar \partial_{z_i}v(\bar{z},z_i)=
\left(\partial_{z_i}K(\bar{z},z_i)\right)e^{\frac{1}{\hbar}K(\bar{z},z_i)}
+\mathcal{O}(\hbar)e^{\frac{1}{\hbar}K(\bar{z},z_i)}.
\end{equation}
The factor $\hbar$ attached to the differential operator cancels the $1/\hbar$ produced by differentiating the exponential. Iterating this argument shows that any polynomial operator $\widehat{F}$ produces only polynomial expressions in derivatives of $K$ and positive powers of $\hbar$, multiplied by the same exponential overlap factor. After division by $\langle z|z_i\rangle$, the latter cancels, and the resulting amplitude $u_\hbar(z,z_i)$ remains a smooth function with derivatives of order $\mathcal{O}(1)$ on the semiclassical scale relevant for the saddle-point analysis.

\begin{proof}
By inserting the resolution of identity
\begin{equation}
\mathbb{I}=\int dv(z_i)\,|z_i\rangle\langle z_i|
\end{equation}
into the matrix element $\langle z|\widehat{F}|z'\rangle$, we obtain
\begin{equation}\label{intF}
\begin{split}
\frac{\langle z|\widehat{F}|z'\rangle}{\langle z|z'\rangle}
&=
\frac{1}{\langle z|z'\rangle}
\int dv(z_i)\,\langle z|z_i\rangle \langle z_i|z'\rangle
\frac{\langle z|\widehat{F}|z_i\rangle}{\langle z|z_i\rangle}\\
&=
\frac{1}{\langle z|z'\rangle}
\int d\tilde{v}(z_i)\,
e^{\frac{1}{\hbar}S(z,z_i,z')}
\frac{\langle z|\widehat{F}|z_i\rangle}{\langle z|z_i\rangle},
\end{split}
\end{equation}
where
\begin{equation}
S(z,z_i,z')=K(z,\bar{z}_i)+K(z_i,\bar{z}')-K(z_i,\bar{z}_i)
\end{equation}
is the geometric phase function associated with the overlap, and the measure factor has been absorbed into $d\tilde{v}(z_i)$.

By the discussion preceding the proof, the amplitude
\begin{equation}
u_\hbar(z_i):=\frac{\langle z|\widehat{F}|z_i\rangle}{\langle z|z_i\rangle}
\end{equation}
has no additional exponential dependence of order $e^{c/\hbar}$ and all derivatives needed in the stationary-phase expansion remain of order $\mathcal{O}(1)$. Therefore the location of the leading saddle is determined by the phase function $S$ alone. Let $\tilde{z}$ denote the isolated non-degenerate critical point satisfying
\begin{equation}
\frac{\partial S(z,z_i,z')}{\partial z_i}=0,
\qquad
\frac{\partial S(z,z_i,z')}{\partial \bar{z}_i}=0.
\end{equation}
Applying the saddle-point expansion of Theorem \ref{theorem111} then gives
\begin{equation}
\langle z|\widehat{F}|z'\rangle=
\frac{1}{\sqrt{\det(-S''(\tilde{z}))/2\pi\hbar}}
e^{\frac{1}{\hbar}S(z,\tilde{z},z')}
\left(
\frac{\langle z|\widehat{F}|\tilde{z}\rangle}{\langle z|\tilde{z}\rangle}
+\mathcal{O}(\hbar)
\right).
\end{equation}
Similarly, the overlap integral yields
\begin{equation}\label{overlapint}
\langle z|z'\rangle=
\int dv(z_i)\,e^{\frac{1}{\hbar}S(z,z_i,z')}=
\frac{1}{\sqrt{\det(-S''(\tilde{z}))/2\pi\hbar}}
e^{\frac{1}{\hbar}S(z,\tilde{z},z')}
\left(
1+\mathcal{O}(\hbar)
\right).
\end{equation}
Dividing the two expressions, the fluctuation determinant and the exponential factor cancel, and we obtain
\begin{equation}
\frac{\langle z|\widehat{F}|z'\rangle}{\langle z|z'\rangle}=
\frac{\langle z|\widehat{F}|\tilde{z}\rangle}{\langle z|\tilde{z}\rangle}
\left(
1+\mathcal{O}(\hbar)
\right).
\end{equation}
The second relation
\begin{equation}
\frac{\langle z|\widehat{F}|z'\rangle}{\langle z|z'\rangle}=
\frac{\langle \tilde{z}|\widehat{F}|z'\rangle}{\langle \tilde{z}|z'\rangle}
\left(
1+\mathcal{O}(\hbar)
\right)
\end{equation}
is proved in the same way.
\end{proof}

Now we can  prove the asymptotic relation by mathematical induction, resolving the variable coupling in multi-integrations.

\begin{lemma}\label{theoremAsymp}
For two coherent states $|z\rangle$ and $|z'\rangle$, given an operator $\widehat{A}$ which is a polynomial of fundamental operators, and the Berezin symbol $C(z_1, z_2) \equiv \frac{\langle z_1 | \hat{A} | z_2 \rangle}{\langle z_1 | z_2 \rangle}\neq0$ for arbitrary coherent states $| z_1 \rangle$ and $| z_2 \rangle$, the following asymptotic relation holds for any positive integer $N$:
\begin{align}
\frac{\langle z| \left( \hat{A}^2 - C^2(z,z') \right)^{N} |z' \rangle}{\langle z | z' \rangle} \sim \mathcal{O}\left(\hbar^{\lceil \frac{N}{2} \rceil}\right), \quad N=1,2,3...
\end{align}
provided the following requirements are satisfied: \\
(1) The semi-classical fluctuation behaves as $\frac{\langle z |\left[\widehat{A}^{2}- C^2(z,z') \right] | z'\rangle}{\langle z|z'\rangle} = \hbar f(z,z')$ for all $z,z'$, where $f(z,z')$ is a smooth function. \\
(2) The Kähler potential satisfies $\Re\left[K\left(z, \bar{z}^{\prime}\right)\right] \le \frac{1}{2}\left[K(z, \bar{z})+K\left(z^{\prime}, \bar{z}^{\prime}\right)\right]$ for all $z,z'$ where the equality is taken as $z=z'$ to ensure an isolated dominant critical point, and the metric $g_{z \bar{z}}>0$ ensures non-degeneracy.
\end{lemma}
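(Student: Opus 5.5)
We prove the claim by induction on $N$, inserting resolutions of identity between the factors of $\widehat B:=\widehat A^2-C^2(z,z')$ and evaluating each insertion with Lemma \ref{lemma1}. The constant $c:=C^2(z,z')$ is fixed by the outer labels throughout. Assumption (1) with $z_1=z$, $z_2=z'$ gives the base case $N=1$:
\begin{equation*}
\frac{\langle z|\widehat B|z'\rangle}{\langle z|z'\rangle}=\hbar f(z,z')=\mathcal O(\hbar).
\end{equation*}
For $N=2$ a direct bound is needed, since $\lceil 2/2\rceil=1$ requires only $\mathcal O(\hbar)$ and this already follows if each factor contributes at most $\mathcal O(1)$ after localization. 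The real content therefore sits in the pattern $\lceil N/2\rceil$: every pair of factors should produce at least one power of $\hbar$.

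For the inductive step we write $\widehat B^{N}=\widehat B\,\widehat B^{N-1}$ and insert $\mathbb I=\int dv(z_i)|z_i\rangle\langle z_i|$ between the two factors. This gives
\begin{equation*}
\frac{\langle z|\widehat B^{N}|z'\rangle}{\langle z|z'\rangle}=\frac{1}{\langle z|z'\rangle}\int d\tilde v(z_i)\,e^{S(z,z_i,z')/\hbar}\,\frac{\langle z|\widehat B|z_i\rangle}{\langle z|z_i\rangle}\,\frac{\langle z_i|\widehat B^{N-1}|z'\rangle}{\langle z_i|z'\rangle}.
\end{equation*}
Assumption (2) ensures that $\Re S$ has a unique non-degenerate maximum, so the saddle $\tilde z$ is isolated and Theorem \ref{theorem111} applies. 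The $z_i$-independent prefactor cancels against (\ref{overlapint}). The first factor is $C^2(z,z_i)-c+\hbar f(z,z_i)$, and the second is, by the induction hypothesis (applied with shifted labels), of order $\hbar^{\lceil (N-1)/2\rceil}$ once its shift is re-centred from $C^2(z_i,z')$ to $c$.

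The key identity is that at the saddle $C^2(z,\tilde z)-c=\mathcal O(\hbar)$. This holds because Lemma \ref{lemma1}, applied with $\widehat F=\widehat A$, gives $C(z,z')=C(z,\tilde z)(1+\mathcal O(\hbar))$. Consequently the leading term $L_0$ of the stationary-phase expansion of the product vanishes to order $\hbar\cdot\hbar^{\lceil(N-1)/2\rceil}$. The first correction $\hbar L_1$ acts as a second-order operator on the amplitude. If it falls on the $\mathcal O(1)$ difference $C^2(z,z_i)-c$, which vanishes only linearly at $\tilde z$, it yields a term $\hbar\cdot\mathcal O(\hbar^{\lceil (N-1)/2\rceil})$ with no gain relative to the bound needed for even $N$. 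Parity is tracked by splitting $\widehat B^N=\widehat B^{2}\widehat B^{N-2}$ and performing the induction in steps of two. Then $\langle z|\widehat B^2|z'\rangle/\langle z|z'\rangle=\mathcal O(\hbar)$ with a smooth coefficient, and the recentring error $C^2(z_i,z')-c$ is absorbed by the binomial expansion
\begin{equation*}
(\widehat A^2-c)^{M}=\sum_{k=0}^{M}\binom{M}{k}\bigl(C^2(z_i,z')-c\bigr)^{M-k}\bigl(\widehat A^2-C^2(z_i,z')\bigr)^{k}.
\end{equation*}
Each term is controlled by the hypothesis at lower order, together with $(C^2(z_i,z')-c)^{j}$ vanishing to order $j$ at the saddle. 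Theorem \ref{theorem111} converts each such order of vanishing into $\hbar^{\lceil j/2\rceil}$, because Gaussian moments of odd order vanish.

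The main obstacle is this moment counting: showing that a factor vanishing to order $j$ at $\tilde z$ contributes $\hbar^{\lceil j/2\rceil}$ uniformly, with $\hbar$-independent constants in the $C^{2k}$ norms appearing in (\ref{saddle1}). The first approach to try is to Taylor expand $C^2(z_i,z')-c$ about $\tilde z$ and use the explicit form of $L_s$ in (\ref{LSU}): the operator $L_s$ contains $2s$ derivatives, so it annihilates functions vanishing to order greater than $2s$ at the saddle. The uniformity of $f$ and of the symbols in (1) then supplies the smoothness needed for the remainder estimate.
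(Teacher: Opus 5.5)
Your proposal is essentially the paper's proof. It inserts one resolution of identity and re-centres the shift $c=C^2(z,z')$ at the intermediate label through the binomial expansion in the drifts $C^2(z,z_i)-c$ and $C^2(z_i,z')-c$. It then uses Lemma~\ref{lemma1} to make these drifts vanish at $\tilde z$, converts a vanishing order $n$ into $\hbar^{\lceil n/2\rceil}$ by counting the $2s$ derivatives in $L_s$, and closes with subadditivity of the ceiling.

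The only real difference is your split $\widehat B^2\,\widehat B^{N-2}$ with induction in steps of two. The paper instead peels off a single factor ($\widehat O^{N}\widehat O$, with strong induction) and re-centres \emph{both} sides, which handles both parities at once, so the parity obstruction you anticipated does not arise. Your left block $\langle z|\widehat B^2|z_i\rangle/\langle z|z_i\rangle=(C^2(z,z_i)-c)^2+2\hbar f(z,z_i)(C^2(z,z_i)-c)+\mathcal O(\hbar)$ must also be re-centred, since it is not uniformly $\mathcal O(\hbar)$ in $z_i$. The count still works because $\lceil(2+m)/2\rceil=1+\lceil m/2\rceil$.
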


\begin{proof}
Let us define
\begin{equation}
\widehat{O}_{z,z'} \equiv \widehat{A}^{2} - C^2(z,z').
\end{equation}
By condition (1), its off-diagonal matrix element satisfies
\begin{equation}
\langle z| \widehat{O}_{z,z'} |z'\rangle = \hbar \langle z|z'\rangle f(z,z'),
\end{equation}
with $f(z,z')$ smooth. We now prove the statement by induction on $N$.

\textbf{Base case: $N=1$.}

This is exactly condition (1):
\begin{equation}
\frac{\langle z| \widehat{O}_{z,z'} |z'\rangle}{\langle z|z'\rangle}=
\hbar f(z,z')
\mathcal{O}(\hbar),
\end{equation}
which agrees with $\mathcal{O}(\hbar^{\lceil 1/2\rceil})$.

\textbf{Base case: $N=2$.}

Insert one resolution of identity:
\begin{equation}\label{Q4_new}
\begin{split}
\langle z| \widehat{O}_{z,z'}^2 |z'\rangle
&=
\int dv(z_1)\,
\langle z| \widehat{O}_{z,z'} |z_1\rangle
\langle z_1| \widehat{O}_{z,z'} |z'\rangle\\
&=
\int dv(z_1)\,
\langle z|z_1\rangle \langle z_1|z'\rangle
\left[
\frac{\langle z|\widehat{A}^2|z_1\rangle}{\langle z|z_1\rangle}
- C^2(z,z')
\right]
\left[
\frac{\langle z_1|\widehat{A}^2|z'\rangle}{\langle z_1|z'\rangle}
- C^2(z,z')
\right].
\end{split}
\end{equation}
Using condition (1) with $z'$ replaced by $z_1$, we write
\begin{equation}
\frac{\langle z|\widehat{A}^2|z_1\rangle}{\langle z|z_1\rangle}=
C^2(z,z_1)+\hbar f(z,z_1),
\qquad
\frac{\langle z_1|\widehat{A}^2|z'\rangle}{\langle z_1|z'\rangle}=
C^2(z_1,z')+\hbar f(z_1,z').
\end{equation}
Hence the amplitude in the integral takes the form
\begin{equation}
u(z_1)=
\left[
C^2(z,z_1)-C^2(z,z')+\hbar f(z,z_1)
\right]
\left[
C^2(z_1,z')-C^2(z,z')+\hbar f(z_1,z')
\right].
\end{equation}
The phase is again given by the geometric overlap action $S(z,z_1,z')$. By Lemma \ref{lemma1}, at the dominant saddle point $\tilde{z}$ one has
\begin{equation}
C(z,\tilde{z})=C(z,z')+\mathcal{O}(\hbar),
\qquad
C(\tilde{z},z')=C(z,z')+\mathcal{O}(\hbar),
\end{equation}
so the leading $\mathcal{O}(\hbar^0)$ contribution to the amplitude vanishes at the saddle. Therefore the zeroth stationary-phase contribution is absent. The first non-vanishing term can only arise when derivatives from the operator $L_s$ in Theorem \ref{theorem111} act on the vanishing amplitude. Since the first possible such term comes from $L_1$, it carries one explicit power of $\hbar$. Consequently,
\begin{equation}
\langle z| \widehat{O}_{z,z'}^2 |z'\rangle=
\hbar \langle z|z'\rangle \mathrm{Pol}_1(z,z')
+\mathcal{O}(\hbar^2),
\end{equation}
which is of order $\mathcal{O}(\hbar^{\lceil 2/2\rceil})=\mathcal{O}(\hbar)$.

\textbf{Inductive step.}

Assume that for every $m\leq N$,
\begin{equation}\label{asumpO}
\langle z| \widehat{O}_{z,z'}^{m} |z'\rangle=
\hbar^{\lceil m/2 \rceil}
\langle z|z'\rangle \mathrm{Pol}_m(z,z').
\end{equation}
We prove the statement for $N+1$.

Insert one resolution of identity and separate one power on the right:
\begin{equation}\label{ON1}
\langle z| \widehat{O}_{z,z'}^{N+1} |z'\rangle=
\int dv(z_1)\,
\langle z|z_1\rangle \langle z_1|z'\rangle
U_L^{N}(z_1) U_R^1(z_1),
\end{equation}
where
\begin{equation}
U_L^{N}(z_1):=
\frac{\langle z| \widehat{O}_{z,z'}^{N} |z_1\rangle}{\langle z|z_1\rangle},
\qquad
U_R^{1}(z_1):=
\frac{\langle z_1| \widehat{O}_{z,z'} |z'\rangle}{\langle z_1|z'\rangle}.
\end{equation}
Introduce the local drifts
\begin{equation}
V_L(z_1)\equiv C^2(z,z_1)- C^2(z,z'),
\qquad
V_R(z_1)\equiv C^2(z_1,z')- C^2(z,z').
\end{equation}
Then
\begin{equation}
\widehat{O}_{z,z'}=\widehat{O}_{z,z_1}+V_L(z_1)
\qquad\text{on the left endpoint sector,}
\end{equation}
and similarly
\begin{equation}
\widehat{O}_{z,z'}=\widehat{O}_{z_1,z'}+V_R(z_1)
\qquad\text{on the right endpoint sector.}
\end{equation}
Therefore
\begin{equation}
U_L^{N}(z_1)=
\sum_{j=0}^{N} \binom{N}{j}
\left(
\frac{\langle z| \widehat{O}_{z,z_1}^{j} |z_1\rangle}{\langle z|z_1\rangle}
\right)
V_L(z_1)^{N-j},
\end{equation}
and
\begin{equation}
U_R^{1}(z_1)=
\sum_{i=0}^{1} \binom{1}{i}
\left(
\frac{\langle z_1| \widehat{O}_{z_1,z'}^{i} |z'\rangle}{\langle z_1|z'\rangle}
\right)
V_R(z_1)^{1-i}.
\end{equation}
By the inductive hypothesis and Lemma \ref{lemma1}, the operator matrix elements contribute orders
\begin{equation}
\mathcal{O}\left(\hbar^{\lceil j/2\rceil}\right)
\qquad\text{and}\qquad
\mathcal{O}\left(\hbar^{\lceil i/2\rceil}\right),
\end{equation}
respectively.

It remains to estimate the contribution of the drift factors $V_L^{N-j}V_R^{1-i}$ under the saddle-point expansion. By Lemma \ref{lemma1}, both $V_L$ and $V_R$ vanish at the saddle point $\tilde{z}$ at leading order. Hence the product
\begin{equation}
V_L(z_1)^{N-j}V_R(z_1)^{1-i}
\end{equation}
vanishes to order
\begin{equation}
n=(N-j)+(1-i)=N+1-j-i
\end{equation}
at the saddle. A nonzero stationary-phase contribution can therefore arise only when sufficiently many derivatives from the operator $L_s$ act on this vanishing factor. Since $L_s$ contains $2s$ derivatives and is weighted by $\hbar^s$, the smallest additional semiclassical power that can compensate a vanishing order $n$ is
\begin{equation}
\hbar^{\lceil n/2\rceil}=
\hbar^{\left\lceil \frac{N+1-j-i}{2}\right\rceil}.
\end{equation}
Combining this with the orders already coming from the operator insertions, each term in the binomial expansion contributes at least
\begin{equation}
\mathcal{O}\left(
\hbar^{
\left\lceil \frac{j}{2} \right\rceil
+
\left\lceil \frac{i}{2} \right\rceil
+
\left\lceil \frac{N+1-j-i}{2} \right\rceil
}
\right).
\end{equation}
Using the elementary inequality
\begin{equation}
\left\lceil a \right\rceil + \left\lceil b \right\rceil + \left\lceil c \right\rceil
\geq
\left\lceil a+b+c \right\rceil,
\end{equation}
we obtain for every pair $(j,i)$
\begin{equation}
\left\lceil \frac{j}{2} \right\rceil
+
\left\lceil \frac{i}{2} \right\rceil
+
\left\lceil \frac{N+1-j-i}{2} \right\rceil
\geq
\left\lceil \frac{N+1}{2} \right\rceil.
\end{equation}
Therefore
\begin{equation}
\frac{\langle z| \widehat{O}_{z,z'}^{N+1} |z'\rangle}{\langle z|z'\rangle}=
\mathcal{O}\left(\hbar^{\left\lceil \frac{N+1}{2}\right\rceil}\right),
\end{equation}
which completes the induction.
\end{proof}

\subsection{The main expansion fomula}
In this subsection, in order to further expand equation (\ref{zexpand111}), we will prove the following asymptotic expansion formula of the Berezin symbol of $(\widehat{A}^2)^q$. 
\begin{theorem1}
For two coherent states $|z\rangle$ and $|z'\rangle$ defined by Eqs. (\ref{defz}), (\ref{prop1}), and (\ref{rI}), let $\widehat{A}$ be a symmetric operator which is a polynomial in the fundamental operators of the theory. Define the off-diagonal Berezin symbol
\begin{equation}
C(z,z') \equiv \frac{\langle z|\widehat{A}|z'\rangle}{\langle z|z'\rangle}.
\end{equation}
Assume that:
\noindent
(1) $C(z,z')$ is non-vanishing for all $z,z'$, so that the branch of the fractional power is well-defined; and the off-diagonal fluctuation of $\widehat{A}^2$ is semiclassically small, namely
\begin{equation}
\frac{\langle z|\widehat{A}^2|z'\rangle\langle z|z'\rangle}{\langle z|\widehat{A}|z' \rangle^2}=
1+\mathcal{O}(\hbar),
\end{equation}
equivalently,
\begin{equation}
\frac{\langle z|(\widehat{A}^2-C^2(z,z'))|z'\rangle}{\langle z|z'\rangle}=
\mathcal{O}(\hbar);
\end{equation}
\noindent
(2) the K\"ahler potential (\ref{Kpotential}) satisfies
\begin{equation}
\Re\left[K\left(z, \bar{z}^{\prime}\right)\right]
< \frac{1}{2}\left[K(z, \bar{z})+K\left(z^{\prime}, \bar{z}^{\prime}\right)\right], \qquad \forall z\neq z', \end{equation} so that the off-diagonal overlap is exponentially suppressed away from the dominant saddle, and the K\"ahler metric is strictly positive definite, \begin{equation} g_{z \bar{z}}=\frac{\partial^2 K(z, \bar{z})}{\partial z \partial \bar{z}}>0,
\end{equation}
ensuring a non-degenerate Hessian for the saddle-point action.

Then, for every positive integer $k$,
\begin{equation}\label{expan1}
\begin{split}
[(\widehat{A}^2)^q](z,z')&=\frac{\langle z |(\widehat{A}^2)^q|z' \rangle}{\langle z |z' \rangle}=C^{2q}(z,z')
\left[
1
+
\sum_{n=1}^{2k}
\binom{q}{n}
\frac{\langle z| \left( \frac{\widehat{A}^2}{C^2(z,z')} - 1 \right)^n |z' \rangle}{\langle z | z' \rangle}
\right]
+
\mathcal{O}(\hbar^{k+1}).
\end{split}
\end{equation}
\end{theorem1}

The proof of this result can be outlined as follows:
First, the operator is formally projected onto the space of eigenvectors using the spectral theorem. To bypass functional convergence issues for unbounded operators, the binomial is truncated using the Taylor series with Lagrange integral remainders. Second, this remainder is factorized at the operator-algebraic level. Finally, the remainder matrix element is evaluated by leveraging our previously established Lemma \ref{theoremAsymp} to bound the remainder at $\mathcal{O}(\hbar^{k+1})$.

\begin{proof}
For notational convenience, we begin with the pure-point form of the functional calculus. This introduces no restriction on the argument, because the estimate of the remainder will subsequently be reformulated entirely at the operator level and will therefore not depend on whether the spectrum is discrete or continuous.

Suppose first that $\widehat{A}^2$ has a discrete spectral resolution. Then
\begin{equation}
(\widehat{A}^2)^q=
\sum_{\lambda \in \sigma(A)} \lambda^{2q} E(\lambda),
\end{equation}
where $E(\lambda)$ denotes the spectral projector. Accordingly,
\begin{equation}
\frac{\langle z|(\widehat{A}^2)^q|z'\rangle}{\langle z|z'\rangle}=
\sum_{\lambda \in \sigma(A)}
\lambda^{2q}
\mu_\lambda(z,z'),
\end{equation}
with
\begin{equation}
\mu_\lambda(z,z')
:=
\frac{\langle z|E(\lambda)|z'\rangle}{\langle z|z'\rangle}.
\end{equation}
Similarly, the Berezin symbol of $\widehat{A}$ is
\begin{equation}
C(z,z')=
\frac{\langle z|\widehat{A}|z'\rangle}{\langle z|z'\rangle}=
\sum_{\lambda \in \sigma(A)} \lambda \,\mu_\lambda(z,z'),
\end{equation}
up to the chosen branch compatible with the operator under consideration. For the present argument, however, the only essential point is that the expansion is organized around the off-diagonal symbol $C(z,z')$.

Introduce
\begin{equation}
x_\lambda:=\frac{\lambda^2}{C^2(z,z')}-1.
\end{equation}
Then
\begin{equation}
[(\widehat{A}^2)^q](z,z')=
C^{2q}(z,z')
\sum_{\lambda}
(1+x_\lambda)^q \mu_\lambda(z,z').
\end{equation}
Instead of using the full binomial series directly, we truncate it algebraically at finite order and keep the Taylor remainder:
\begin{equation}
(1+x_\lambda)^q=
\sum_{n=0}^{N}\binom{q}{n}x_\lambda^n
+
R_N(x_\lambda).
\end{equation}
Hence
\begin{equation}
[(\widehat{A}^2)^q](z,z')=
C^{2q}(z,z')
\sum_{n=0}^{N}\binom{q}{n}\sum_\lambda x_\lambda^n \mu_\lambda(z,z')
+
\mathcal{E}_N,
\end{equation}
where
\begin{equation}\label{error}
\mathcal{E}_N
:=
C^{2q}(z,z')
\sum_\lambda R_N(x_\lambda)\mu_\lambda(z,z').
\end{equation}
The finite sum gives the truncated expansion in Eq. (\ref{expan1}); it therefore remains to estimate $\mathcal{E}_N$.

By Taylor's theorem with integral remainder,
\begin{equation}\label{intRN}
R_N(x)=
x^{N+1}\Phi_N(x),
\end{equation}
where
\begin{equation}\label{intphi}
\Phi_N(x)=
(-1)^{N+1}\frac{C_{N,q}}{N!}
\int_0^1 dt\,(1-t)^N(1+tx)^{q-N-1}.
\end{equation}
For the argument below, the only property of $\Phi_N$ that is needed is that it is analytic, and hence bounded, in a neighborhood of the origin containing the relevant semiclassical values of $x$. It is also useful to note that the integral remainder remains well behaved at the physical boundary of the spectrum. In the extreme case $x\to -1$, corresponding to $\lambda\to 0$, the factor $(1+tx)^{q-N-1}$ reduces to $(1-t)^{q-N-1}$, so that the integrand in Eq.~(\ref{intphi}) behaves as $(1-t)^{q-1}$. Since $q>0$ in the applications of interest, the integral $\int_0^1 dt\,(1-t)^{q-1}$ remains absolutely convergent. Therefore $\Phi_N(\widehat X)$ does not develop singular behavior at the lower edge of the physical spectrum.

We now rewrite the remainder in a way that no longer depends on the explicit spectral type. Define
\begin{equation}
\widehat{X}
:=
\frac{\widehat{A}^2}{C^2(z,z')}-1=
\frac{\widehat{A}^2-C^2(z,z')}{C^2(z,z')}=
\frac{\widehat{O}_{z,z'}}{C^2(z,z')}.
\end{equation}
Then the factorization (\ref{intRN}) lifts to the operator identity
\begin{equation}
R_N(\widehat{X})=
\widehat{X}^{N+1}\Phi_N(\widehat{X}),
\end{equation}
and therefore
\begin{equation}\label{exact_op_E}
\mathcal{E}_N=
\frac{\langle z|R_N(\widehat{X})|z'\rangle}{\langle z|z'\rangle}=
\frac{\langle z|\widehat{X}^{N+1}\Phi_N(\widehat{X})|z'\rangle}{\langle z|z'\rangle}.
\end{equation}
Insert one resolution of identity:
\begin{equation}\label{E_integral}
\begin{split}
\mathcal{E}_N
&=
\int dv(z_1)\,
\frac{\langle z|\widehat{X}^{N+1}|z_1\rangle
\langle z_1|\Phi_N(\widehat{X})|z'\rangle}{\langle z|z'\rangle}\\
&=
\frac{1}{\langle z|z'\rangle}
\int dv(z_1)\,
e^{\frac{1}{\hbar}S(z,z_1,z')}
u_1(z_1)u_2(z_1),
\end{split}
\end{equation}
where
\begin{equation}
u_1(z_1):=
\frac{\langle z|\widehat{X}^{N+1}|z_1\rangle}{\langle z|z_1\rangle},
\qquad
u_2(z_1):=
\frac{\langle z_1|\Phi_N(\widehat{X})|z'\rangle}{\langle z_1|z'\rangle}.
\end{equation}
To prove that the operator matrix element $u_2(z_1)$ is a smooth macroscopic amplitude without relying on potentially divergent power series, we directly utilize the integral representation of the Taylor remainder. Because the physical spectrum requires $x \ge -1$, the term $(1+tx) \ge 1-t \ge 0$ is positive and bounded away from any singularities for all $t \in [0,1)$. Consequently, $\Phi(x)$ is a globally smooth and well-behaved function. Commuting the semiclassical limit with the compact integration over $t$, $u_2(z_1)$ strictly evaluates to its classical principal symbol:
\begin{equation}
u_2(z_1) = \Phi\left( \frac{\langle z_1 | \widehat{X} | z' \rangle}{\langle z_1 | z' \rangle} \right) + \mathcal{O}(\hbar) = \mathcal{O}(1).
\end{equation}
Thus the semiclassical order of $\mathcal{E}_N$ is controlled entirely by $u_1(z_1)$ together with the stationary-phase expansion of the geometric overlap.

The saddle-point expansion of the integral (\ref{E_integral}) is governed by the
isolated critical point $\tilde{z}$ satisfying $\nabla_{z_1}S=0$.
Introduce the local drift
\begin{equation}
V(z_1)\equiv C^2(z,z_1)-C^2(z,z'),
\end{equation}
and observe that $\widehat{O}_{z,z'}=\widehat{O}_{z,z_1}+V(z_1)$.
Since $\widehat{X}=\widehat{O}_{z,z'}/C^2(z,z')$, the amplitude $u_1$ admits the binomial expansion
\begin{equation}\label{u1_expand}
\begin{split}
u_1(z_1)
&=\frac{\langle z|\widehat{X}^{N+1}|z_1\rangle}{\langle z|z_1\rangle}
=\frac{1}{C^{2(N+1)}(z,z')}
\frac{\langle z|\left(\widehat{O}_{z,z_1}+V(z_1)\right)^{N+1}|z_1\rangle}
     {\langle z|z_1\rangle}\\
&=\frac{1}{C^{2(N+1)}(z,z')}
  \sum_{j=0}^{N+1}\binom{N+1}{j}
  \left(
    \frac{\langle z|\widehat{O}_{z,z_1}^{j}|z_1\rangle}{\langle z|z_1\rangle}
  \right)
  V(z_1)^{N+1-j}.
\end{split}
\end{equation}
The prefactor $C^{-2(N+1)}(z,z')$ is a $c$-number independent of $z_1$ and $\hbar$
and therefore does not affect the semiclassical order.
By Theorem~\ref{theoremAsymp}, the operator matrix elements contribute
$\mathcal{O}\!\left(\hbar^{\lceil j/2\rceil}\right)$
for arbitrary endpoints, independently of the saddle-point evaluation.
By Lemma~\ref{lemma1}, the drift $V(z_1)$ vanishes at leading order at
$z_1=\tilde{z}$.
To obtain a non-vanishing contribution, the factor $V(z_1)^{N+1-j}$,
which vanishes to order $n=N+1-j$ at the saddle, must be differentiated
by the H\"ormander operator $L_s$ in Theorem~\ref{theorem111}.
Since $L_s$ carries $2s$ derivatives and is weighted by $\hbar^s$,
the minimal semiclassical power required to compensate $n$ vanishing orders
is $\hbar^{\lceil n/2\rceil}$.
Combining the two contributions, each $j$-th term is bounded by
\begin{equation}
\mathcal{O}\!\left(
  \hbar^{\left\lceil\frac{j}{2}\right\rceil
        +\left\lceil\frac{N+1-j}{2}\right\rceil}
\right)
\;\ge\;
\mathcal{O}\!\left(
  \hbar^{\left\lceil\frac{N+1}{2}\right\rceil}
\right),
\end{equation}
where the inequality follows from the sub-additivity
$\lceil a\rceil+\lceil b\rceil\ge\lceil a+b\rceil$.
Because $u_2(z_1)=\mathcal{O}(1)$, derivatives of the composite amplitude
$u_1 u_2$ can only generate higher-order corrections.
Setting the truncation order to $N=2k$, we obtain
\begin{equation}
\mathcal{E}_{2k}
=\mathcal{O}\!\left(
  \hbar^{\left\lceil\frac{2k+1}{2}\right\rceil}
\right)
=\mathcal{O}(\hbar^{k+1}),
\end{equation}
which completes the proof.
\end{proof}

\textbf{Remark.}\;
Although our main result is formulated for $(\widehat{A}^2)^q$,
the underlying method extends to a broad class of analytic functions
$f(\widehat{A})$.
One expands $f$ around the off-diagonal symbol
$C(z,z')=\langle z|\widehat{A}|z'\rangle/\langle z|z'\rangle$
via a Taylor series with integral remainder and factorizes
the remainder as $R_N(\widehat{X})=\widehat{X}^{N+1}\Phi_N(\widehat{X})$
at the operator level.
The function $\Phi_N$ depends on $f$ only through its values and
derivatives evaluated at $x=0$ in the integral representation
(\ref{intRN}); in particular, the change from $(1+x)^q$ to a more
general $f$ affects only the prefactor outside the integration variable
$t$ in Eq.~(\ref{intphi}) and does not modify the analytic structure
that guarantees $\Phi_N(\widehat{X})=\mathcal{O}(1)$.
The saddle-point evaluation and the zero-root compensation mechanism of lemma~\ref{theoremAsymp} are therefore expected to apply in the same way, under analogous analyticity, non-vanishing, and semiclassical regularity assumptions.

\section{Application in LQG}\label{sectionIV}

In the deparametrized framework of canonical LQG,
matter fields (e.g.\ scalar or Gaussian dust) serve as
physical clocks and rods, and the resulting coherent-state path integral
requires the perturbative evaluation of matrix elements at each
intermediate time step, in direct analogy with Eq.~(\ref{cpath}).
Because the physical Hamiltonian is built from holonomy and flux
operators, we must show that polynomials of these operators
satisfy the hypotheses of the expansion~(\ref{expan1}).

The section is organized as follows.
Subsection~\ref{sec:LQGcs} reviews the coherent-state representation
used in LQG.
Subsection~\ref{sec:LQGexpand} proves that flux-operator functions
can be expanded via our main result, re-expresses the
holonomy action in terms of the flux operator. Subsection~\ref{sec:LQGN} presents a
comparison of the expansion formula against the numerical quantum
matrix elements.

\subsection{LQG coherent states}\label{sec:LQGcs}

The Thiemann complexifier coherent states
\cite{Thiemann:2000bw,Thiemann:2000ca,Thiemann:2000bx,Thiemann:2000by,Sahlmann:2001nv,Thiemann:2002vj}
constitute the principal tool for the semiclassical analysis of LQG.
They satisfy the over-completeness relation
(the third property in Eq.~(\ref{prop1}))
and possess two additional features:
(i) the coherent-state label parametrizes the LQG phase space, and
(ii) the overlap function is sharply peaked in phase space.

The single-edge coherent state $|g(e)\rangle$ associated with an edge
$e\in E(\Gamma)$ is defined as a function of the $SU(2)$ element $h(e)$
\cite{Thiemann:2000ca}:
\begin{equation}\label{coherent1}
\psi^{t}_{g(e)}(h(e))
\equiv\langle h(e)|g(e)\rangle
=\sum_{j_e\in\mathbb{Z}_+/2\cup\{0\}}
  d_{j_e}\,e^{-t\lambda_{j_e}/2}\,
  \chi_{j_e}\!\left(g(e)h^{-1}(e)\right),
\end{equation}
where $\chi_j$ is the $SU(2)$ character in spin-$j$,
$d_j:=2j+1$, $\lambda_j:=j(j+1)$, and
$t=l_p^2/a^2\propto\hbar$ is the semiclassical parameter that vanishes
in the classical limit.
The label $g(e)\in SL(2,\mathbb{C})$ is the complexified holonomy
\begin{equation}\label{groupg}
g(e)=e^{-ip^a(e)\tau_a/2}\,e^{z^a(e)\tau_a/2},
\qquad p^a(e),\,z^a(e)\in\mathbb{R}^3,
\end{equation}
with $\tau_a=-i\sigma_a$ and $\sigma_a$ ($a=1,2,3$) the Pauli matrices.

The overlap function reads \cite{Thiemann:2000ca}
\begin{equation}\label{LQGoverlap}
\langle\psi^t_g|\psi^t_{g'}\rangle
=\frac{\sqrt{\pi}\,e^{t/4}}{4\sinh(z_0)\,T^3}
 \sum_{n\in\mathbb{Z}}(z_0-2\pi i n)\,
 e^{(z_0-2\pi in)^2/t}
\;\approx\;
\frac{\sqrt{\pi}\,e^{t/4}}{4\sinh(z_0)\,T^3}\,
z_0\,e^{z_0^2/t},
\end{equation}
where $T\equiv\sqrt{t}/2$ and
$z_0 = m(g,g')$ with $m(g,g'):= \cosh^{-1}(\frac{1}{2}\operatorname{tr}(g'g^\dagger))$.
The approximation retains only the $n=0$ term; all others are
exponentially suppressed as $\sim e^{-4n^2\pi^2/t}$ for small $t$.
This approximation is adopted throughout.

Using the Poisson summation formula, the single-edge norm is
\cite{Thiemann:2000ca}
\begin{equation}\label{Normalization}
\|\psi^t_g\|^2
=\psi^{2t}_{H^2}(1)
\;\approx\;
\frac{\sqrt{\pi}\,e^{t/4}}{4\sinh(y_g)\,T^3}\,
y_g\,e^{y_g^2/t},
\end{equation}
where $H=e^{-ip^a(e)\tau^a/2}$ is the boost part of $g$ and
$\cosh(y_g):=\tfrac{1}{2}\operatorname{tr}(H^2)$.
The normalized overlap therefore takes the form
\begin{equation}\label{LQGoverlapNorm}
\langle\tilde{\psi}^t_g|\tilde{\psi}^t_{g'}\rangle
\;\approx\;
\frac{\sqrt{\sinh(y_g)\sinh(y_{g'})}}{\sinh(z_0)}\,
\frac{z_0}{\sqrt{y_g\,y_{g'}}}\;
e^{\left(z_0^2-\frac{1}{2}y_g^2-\frac{1}{2}y_{g'}^2\right)/t},
\end{equation}
where $|\tilde{\psi}^t_g\rangle$ denotes the normalized coherent state.
The prefactor before the exponential originates from the
integration measure, and the exponential term
$e^{(z_0^2-y_g^2/2-y_{g'}^2/2)/t}$ recovers exactly the third
property of Eq.~(\ref{prop1}).

For a graph $\Gamma$ with $L$ edges $e_1,\dots,e_L$, the multi-edge
coherent state is the product
\begin{equation}\label{stateL}
\psi^t_{\Gamma,\{g\}}(h(e_1),\dots,h(e_L))
=\prod_{l=1}^{L}
\left(
  \sum_{j_l\in\mathbb{Z}_+/2\cup\{0\}}
  d_{j_l}\,e^{-t\lambda_{j_l}/2}\,
  \chi_{j_l}\!\left(g(e_l)h^{-1}(e_l)\right)
\right),
\end{equation}
where $\{g\}$ denotes the collection of complexified holonomies for
every edge in $\Gamma$.
The gauge-invariant coherent state is obtained by group averaging
\cite{PhysRevD.92.104023}:
\begin{equation}\label{groupA}
\Psi^t_{\Gamma,\{g\}}(h(e_1),\dots,h(e_L))
=\int_{SU(2)^N}
  d\mu_H(\tilde{g}_1)\cdots d\mu_H(\tilde{g}_N)\;
  \psi^t_{\{g\}}\!\left(
    \tilde{g}_{t(1)}h(e_1)\tilde{g}^{-1}_{s(1)},\dots,
    \tilde{g}_{t(L)}h(e_L)\tilde{g}^{-1}_{s(L)}
  \right),
\end{equation}
where $s(l)$ and $t(l)$ are the source and target vertices of edge~$l$.

\subsection{Application of the expansion formula in LQG}
\label{sec:LQGexpand}

The goal is to construct a perturbative expansion for the matrix
elements of the physical Hamiltonian operator.
The general matrix element to be evaluated is \cite{Giesel:2006um}
\begin{equation}\label{generalM}
\langle\psi|\,
  f_1(\widehat{h})\,F_1(\widehat{V}_{v_1})\,
  f_2(\widehat{h})\,F_2(\widehat{V}_{v_2})\,\cdots\,
  F_N(\widehat{V}_{v_N})\,f_{N+1}(\widehat{h})
\,|\psi'\rangle,
\end{equation}
where the $f_i$ are polynomials of the holonomy operator and
$F_I(\widehat{V}_{v_I})\propto(\widehat{Q}_{v_I}^2)^{q_I}$
with $0<q_I=m_I/n_I\le 1/4$ a rational number ($m_I$, $n_I$ coprime).
The self-adjoint operator $\widehat{Q}_v$ is
\begin{equation}\label{Qdef}
\widehat{Q}_v
=\sum_{\substack{e_I,e_J,e_K\in E(\gamma)\\
                  e_I\cap e_J\cap e_K=v}}
  \epsilon(e_I(v),e_J(v),e_K(v))\,
  \epsilon_{ijk}\,
  \widehat{p}^{\,i}_{e_I(v)}\,
  \widehat{p}^{\,j}_{e_J(v)}\,
  \widehat{p}^{\,k}_{e_K(v)},
\end{equation}
where the sum runs over all ordered triples of edges incident at $v$
($i,j,k=1,2,3$), $\epsilon(e_I(v),e_J(v),e_K(v))$ is the orientation
sign of the three tangent vectors, and the flux components
$\widehat{p}^{\,i}_{e_I}=it\,X^i_{e_I}/2$ are defined through the
right-invariant vector fields
$X^i_{e_I}=\operatorname{tr}\!\left((\tau^i h_{e_I})^T
\partial/\partial h_{e_I}\right)$ on $SU(2)$.

Upon inserting the resolution of identity between each pair of
operators, Eq.~(\ref{generalM}) becomes
\begin{equation}\label{generalM2}
\begin{split}
\int\cdots\int d\mu\;\bigl[
  &\langle\psi|f_1(\widehat{h})|\psi_1\rangle\,
   \langle\psi_1|F_1(\widehat{V}_{v_1})|\psi_2\rangle\,
   \langle\psi_2|f_2(\widehat{h})|\psi_3\rangle\,
   \langle\psi_3|F_2(\widehat{V}_{v_2})|\psi_4\rangle\;\cdots\\
  &\qquad\cdots\;
   \langle\psi_{2N-1}|F_N(\widehat{V}_{v_N})|\psi_{2N}\rangle\,
   \langle\psi_{2N}|f_{N+1}(\widehat{h})|\psi'\rangle
\bigr],
\end{split}
\end{equation}
where $d\mu$ is the collective measure of the inserted resolutions.

For each volume-type factor $F[\widehat{V}_v]$, our main result
yields the expansion
\begin{equation}\label{VexpandNew}
\frac{\langle\psi|F[\widehat{V}_v]|\psi'\rangle}
     {\langle\psi|\psi'\rangle}
\;\sim\;
\left(
  \frac{\langle\psi|\widehat{Q}_v|\psi'\rangle}
       {\langle\psi|\psi'\rangle}
\right)^{\!2q}
\left[
  1+\sum_{N=1}^{2k}\binom{q}{N}
  \frac{\langle\psi|
    \left(
      \frac{\langle\psi|\psi'\rangle^2\,\widehat{Q}_v^2}
           {\langle\psi|\widehat{Q}_v|\psi'\rangle^2}-1
    \right)^{\!N}
  |\psi'\rangle}
  {\langle\psi|\psi'\rangle}
\right]
+\mathcal{O}(t^{k+1}).
\end{equation}
This replaces the earlier expansion proposed in \cite{Giesel:2006um}:
\begin{equation}\label{Vexpand}
\frac{\langle\psi|F[\widehat{V}_v]|\psi'\rangle}
     {\langle\psi|\psi'\rangle}
\;\approx\;
\langle\tilde{\psi}|\widehat{Q}_v|\tilde{\psi}\rangle^{2q}
\left[
  1+\sum_{N=1}^{2k+1}
  \binom{q}{N}
  \frac{\langle\psi|
    \left(
      \frac{\widehat{Q}_v^2}
           {\langle\tilde{\psi}|\widehat{Q}_v|\tilde{\psi}\rangle^2}-1
    \right)^{\!N}
  |\psi'\rangle}
  {\langle\psi|\psi'\rangle}
\right]+\mathcal{O}(t^{k+1}),
\end{equation}
where $\tilde{\psi}$ denotes the normalized LQG coherent state.
The key deficiency of Eq.~(\ref{Vexpand})
is that it replaces the full off-diagonal matrix elements with
diagonal expectation values of a single coherent state
$|\psi^t_g\rangle$.
While this may be adequate when $|\psi^t_g\rangle$ and
$|\psi^t_{g'}\rangle$ have strong overlap, it generically introduces
leading-order errors for distinct states.
The new expansion~(\ref{VexpandNew}) resolves this by systematically
using genuine off-diagonal matrix elements, ensuring consistency in
the semiclassical limit.

Recall that the non-vanishing assumption $C(z,z') = \frac{\langle\psi|\widehat{Q}_v|\psi'\rangle}{\langle\psi|\psi'\rangle} \neq 0$ is essential for organizing the expansion around the off-diagonal Berezin symbol and for fixing the branch of the fractional power. In the LQG applications considered here, this requirement is naturally satisfied in non-degenerate semiclassical sectors carrying non-vanishing geometric flux. However, the kinematical state space also contains degenerate configurations, including zero-volume or vanishing-flux geometries, for which the off-diagonal symbol may vanish or become parametrically small. In such regions the expansion variable $\widehat X=(\widehat A^2/C^2)-1$ becomes singular, and the present asymptotic expansion ceases to be uniformly valid. Extending the formalism to transitions probing such degenerate sectors would require an additional regularization or a different choice of expansion center.

We now verify that the hypotheses of our main result are
satisfied and then present a numerical comparison against
quantum matrix elements computed with the algorithm of
\cite{Liliu:202603,LQG-Volume}.

\subsubsection*{Verification of condition~(1)}

Consider first the action of a single flux component $\widehat{p}^j_e$
on the single-edge coherent state:
\begin{equation}\label{paction}
\widehat{p}^j_e\,\psi^t_g(h)
:=\frac{it}{2}\left(\frac{d}{ds}\right)_{\!s=0}
  \psi^t_g(e^{s\tau_j}h).
\end{equation}
The matrix element evaluates to
\begin{equation}
\langle\psi^t_g,\widehat{p}^j_e\,\psi^t_{g'}\rangle
=\frac{it}{2}\left(\frac{d}{ds}\right)_{\!s=0}
 \psi^{2t}_{e^{-s\tau_j}g'g^\dagger}(1).
\end{equation}
Define the complex variable $z$ through
\begin{equation}\label{zexpand}
\cosh(z)
=\tfrac{1}{2}\operatorname{tr}\!\left(e^{-s\tau_j}g'g^\dagger\right)
=\tfrac{1}{2}\left[
  \operatorname{tr}(g'g^\dagger)
  -s\operatorname{tr}(\tau_j g'g^\dagger)
\right]+\mathcal{O}(s^2).
\end{equation}
Applying the Poisson re-summation formula and retaining only the $n=0$
term gives
\begin{equation}\label{zmat}
\langle\psi^t_g,\widehat{p}^j_e\,\psi^t_{g'}\rangle
=\frac{it}{2}\left(\frac{d}{ds}\right)_{\!s=0}
 \frac{\sqrt{\pi}\,e^{t/4}}{4\sinh(z)\,T^3}\,
 z\,e^{z^2/t}.
\end{equation}
Setting $z=z_0+\delta$ with
$\cosh(z_0):=\tfrac{1}{2}\operatorname{tr}(g'g^\dagger)$ and comparing
the expansion
\begin{equation}
\cosh(z)
=\cosh(z_0)+\delta\sinh(z_0)+\mathcal{O}(\delta^2)
\end{equation}
with Eq.~(\ref{zexpand}) yields
\begin{equation}
\delta=-s\,
\frac{\operatorname{tr}(\tau_j g'g^\dagger)}{2\sinh(z_0)}.
\end{equation}
The matrix element is therefore \cite{Thiemann:2000bx}
\begin{equation}
\langle\psi^t_g,\widehat{p}^j_e\,\psi^t_{g'}\rangle
=-\frac{it}{4}\,
\frac{\operatorname{tr}(\tau_j g'g^\dagger)}{\sinh(z_0)}\,
\frac{\sqrt{\pi}\,e^{t/4}}{4T^3}\,e^{z_0^2/t}
\left[
  \frac{2z_0^2}{t\sinh(z_0)}
  -\frac{z_0\cosh(z_0)}{\sinh^2(z_0)}
  +\frac{1}{\sinh(z_0)}
\right].
\end{equation}
Dividing by the overlap~(\ref{LQGoverlap}),
\begin{equation}\label{Q1Mat1}
\frac{\langle\psi^t_g|\widehat{p}^j_e|\psi^t_{g'}\rangle}
     {\langle\psi^t_g|\psi^t_{g'}\rangle}
=-\frac{i}{2}\,
\frac{z_0\operatorname{tr}(\tau_j g'g^\dagger)}{\sinh(z_0)}
+\mathcal{O}(t).
\end{equation}

For two consecutive flux operators on the same edge,
\begin{equation}
\widehat{p}^i_e\widehat{p}^j_e\,\psi^t_g(h)
=-\frac{t^2}{4}
\left(\frac{d}{ds_1}\right)_{\!s_1=0}
\left(\frac{d}{ds_2}\right)_{\!s_2=0}
\psi^t_{e^{-s_2\tau_j}e^{-s_1\tau_i}g}(h),
\end{equation}
so that
\begin{equation}
\langle\psi^t_g|\widehat{p}^i_e\widehat{p}^j_e|\psi^t_{g'}\rangle
=-\frac{t^2}{4}
\left(\frac{d}{ds_1}\right)_{\!s_1=0}
\left(\frac{d}{ds_2}\right)_{\!s_2=0}
\frac{\sqrt{\pi}\,e^{t/4}}{4\sinh(z)\,T^3}\,
z\,e^{z^2/t},
\end{equation}
where
$\cosh z:=\tfrac{1}{2}\operatorname{tr}(e^{-s_2\tau_j}e^{-s_1\tau_i}
g'g^\dagger)$
and only the $n=0$ term is retained.
Defining
\begin{equation}
\cosh z_1:=\tfrac{1}{2}\operatorname{tr}(e^{-s_1\tau_i}g'g^\dagger),
\qquad
\cosh z_0:=\tfrac{1}{2}\operatorname{tr}(g'g^\dagger),
\end{equation}
the chain rule gives
\begin{equation}
\begin{split}
\left(\frac{d}{ds_2}\right)_{\!s_2=0}
&=-\frac{\operatorname{tr}(\tau_j e^{-s_1\tau_i}g'g^\dagger)}
        {2\sinh(z_1)}
  \left(\frac{d}{dz}\right)_{\!z=z_1},\\
\left(\frac{d}{ds_1}\right)_{\!s_1=0}
&=-\frac{\operatorname{tr}(\tau_i g'g^\dagger)}
        {2\sinh(z_0)}
  \left(\frac{d}{dz}\right)_{\!z=z_0}.
\end{split}
\end{equation}
Carrying out the differentiation,
\begin{equation}
\begin{split}
\langle\psi^t_g|\widehat{p}^i_e\widehat{p}^j_e|\psi^t_{g'}\rangle
={}&
-\frac{t^2}{4}
\left(\frac{d}{ds_1}\right)_{\!s_1=0}
\!\left[
  -\frac{\operatorname{tr}(\tau_j e^{-s_1\tau_i}g'g^\dagger)}
        {2\sinh(z_1)}
  \left(\frac{d}{dz}\right)_{\!z=z_1}
  \frac{\sqrt{\pi}\,e^{t/4}}{4\sinh(z)\,T^3}\,
  z\,e^{z^2/t}
\right]\\
={}&
-\frac{t^2}{4}
\!\left[
  -\frac{\left(\frac{d}{ds_1}\right)_{\!s_1=0}
         \operatorname{tr}(\tau_j e^{-s_1\tau_i}g'g^\dagger)}
        {2\sinh(z_0)}
  \left(\frac{d}{dz}\right)_{\!z=z_0}
  \frac{\sqrt{\pi}\,e^{t/4}}{4\sinh(z)\,T^3}\,
  z\,e^{z^2/t}
\right]\\
&-\frac{t^2}{4}
\!\left[
  \frac{\operatorname{tr}(\tau_j g'g^\dagger)\,
        \operatorname{tr}(\tau_i g'g^\dagger)}
       {4\sinh^2(z_0)}
  \left(\frac{d}{dz_1}\right)_{\!z_1=z_0}
  \!\!\left(
    \frac{1}{\sinh(z_1)}
    \left(\frac{d}{dz}\right)_{\!z=z_1}
    \frac{\sqrt{\pi}\,e^{t/4}}{4\sinh(z)\,T^3}\,
    z\,e^{z^2/t}
  \right)
\right].
\end{split}
\end{equation}
Dividing by the overlap,
\begin{equation}\label{Q2Mat1}
\frac{\langle\psi^t_g|\widehat{p}^i_e\widehat{p}^j_e|\psi^t_{g'}\rangle}
     {\langle\psi^t_g|\psi^t_{g'}\rangle}
=-\frac{1}{4}\,
\frac{z_0^2\operatorname{tr}(\tau_i g'g^\dagger)\,
      \operatorname{tr}(\tau_j g'g^\dagger)}
     {\sinh^2(z_0)}
+\mathcal{O}(t).
\end{equation}
Combining Eqs.~(\ref{Q1Mat1}) and (\ref{Q2Mat1}), one obtains the
single-edge fluctuation identity
\begin{equation}\label{respe}
\frac{\langle\psi^t_g|\psi^t_{g'}\rangle\,
      \langle\psi^t_g|\widehat{p}^i_e\widehat{p}^j_e|\psi^t_{g'}\rangle}
     {\langle\psi^t_g|\widehat{p}^i_e|\psi^t_{g'}\rangle\,
      \langle\psi^t_g|\widehat{p}^j_e|\psi^t_{g'}\rangle}
-1
=\mathcal{O}(t).
\end{equation}

We now lift this result to the multi-edge operator $\widehat{Q}_v$ and
show that
\begin{equation}\label{resQe}
\frac{\langle\psi^t_{\Gamma,\{g\}}|\psi^t_{\Gamma,\{g'\}}\rangle\;
      \langle\psi^t_{\Gamma,\{g\}}|\widehat{Q}_v^2|
      \psi^t_{\Gamma,\{g'\}}\rangle}
     {\langle\psi^t_{\Gamma,\{g\}}|\widehat{Q}_v|
      \psi^t_{\Gamma,\{g'\}}\rangle^2}
-1
=\mathcal{O}(t),
\end{equation}
which is condition~(1) for $\widehat{Q}_v$.
The argument relies on the following algebraic observation.
Let $(\alpha,\beta,\dots)$ and $(a,b,\dots)$ be two sequences of
complex numbers in one-to-one correspondence satisfying
\begin{equation}\label{sum_alpha}
\frac{\alpha}{a}
=\frac{\beta}{b}
=\cdots
=\kappa+\mathcal{O}(t)
\end{equation}
for a fixed complex constant $\kappa$.
Then
\begin{equation}\label{sum_alphak}
\frac{\alpha+\beta+\cdots}{a+b+\cdots}=\frac{(\kappa+\mathcal{O}(t))(a+b+\cdots)}{a+b+\cdots}
=\kappa+\mathcal{O}(t).
\end{equation}

Starting from the explicit definition of $\widehat{Q}_v$,
Eq.~(\ref{resQe}) can be rewritten as
\begin{equation}\label{multiedgeQ}
\begin{split}
&\frac{\langle\psi^t_{\Gamma,\{g\}}|\psi^t_{\Gamma,\{g'\}}\rangle\;
       \langle\psi^t_{\Gamma,\{g\}}|\widehat{Q}_v^2|
       \psi^t_{\Gamma,\{g'\}}\rangle}
      {\langle\psi^t_{\Gamma,\{g\}}|\widehat{Q}_v|
       \psi^t_{\Gamma,\{g'\}}\rangle^2}\\
={}&
\frac{\displaystyle\sum
  \epsilon(e_{I_1},e_{J_1},e_{K_1})\,
  \epsilon(e_{I_2},e_{J_2},e_{K_2})\,
  \epsilon_{i_1 j_1 k_1}\,\epsilon_{i_2 j_2 k_2}\;
  \langle\psi^t_{\Gamma,\{g\}}|
    \widehat{p}^{i_1}_{e_{I_1}}
    \widehat{p}^{i_2}_{e_{I_2}}
    \widehat{p}^{j_1}_{e_{J_1}}
    \widehat{p}^{j_2}_{e_{J_2}}
    \widehat{p}^{k_1}_{e_{K_1}}
    \widehat{p}^{k_2}_{e_{K_2}}
  |\psi^t_{\Gamma,\{g'\}}\rangle\;
  \langle\psi^t_{\Gamma,\{g\}}|\psi^t_{\Gamma,\{g'\}}\rangle}
{\displaystyle\sum
  \epsilon(e_{I_1},e_{J_1},e_{K_1})\,
  \epsilon(e_{I_2},e_{J_2},e_{K_2})\,
  \epsilon_{i_1 j_1 k_1}\,\epsilon_{i_2 j_2 k_2}\;
  \langle\psi^t_{\Gamma,\{g\}}|
    \widehat{p}^{i_1}_{e_{I_1}}
    \widehat{p}^{j_1}_{e_{J_1}}
    \widehat{p}^{k_1}_{e_{K_1}}
  |\psi^t_{\Gamma,\{g'\}}\rangle\;
  \langle\psi^t_{\Gamma,\{g\}}|
    \widehat{p}^{i_2}_{e_{I_2}}
    \widehat{p}^{j_2}_{e_{J_2}}
    \widehat{p}^{k_2}_{e_{K_2}}
  |\psi^t_{\Gamma,\{g'\}}\rangle},
\end{split}
\end{equation}
where in the second step the denominator has been expanded over the
same double index set.
By the algebraic identity~(\ref{sum_alphak}), it suffices to show that, for a fixed set of indices:
\begin{equation}
    (e_{I_1},e_{J_1},e_{K_1},e_{I_2},e_{J_2},e_{K_2},i_1,j_1,k_1,i_2,j_2,k_2),
\end{equation}
each summand satisfies
\begin{equation}\label{psum123}
\frac{\langle\psi^t_{\Gamma,\{g\}}|
  \widehat{p}^{i_1}_{e_{I_1}}
  \widehat{p}^{i_2}_{e_{I_2}}
  \widehat{p}^{j_1}_{e_{J_1}}
  \widehat{p}^{j_2}_{e_{J_2}}
  \widehat{p}^{k_1}_{e_{K_1}}
  \widehat{p}^{k_2}_{e_{K_2}}
|\psi^t_{\Gamma,\{g'\}}\rangle\;
\langle\psi^t_{\Gamma,\{g\}}|\psi^t_{\Gamma,\{g'\}}\rangle}
{\langle\psi^t_{\Gamma,\{g\}}|
  \widehat{p}^{i_1}_{e_{I_1}}
  \widehat{p}^{j_1}_{e_{J_1}}
  \widehat{p}^{k_1}_{e_{K_1}}
|\psi^t_{\Gamma,\{g'\}}\rangle\;
\langle\psi^t_{\Gamma,\{g\}}|
  \widehat{p}^{i_2}_{e_{I_2}}
  \widehat{p}^{j_2}_{e_{J_2}}
  \widehat{p}^{k_2}_{e_{K_2}}
|\psi^t_{\Gamma,\{g'\}}\rangle}
=1+\mathcal{O}(t).
\end{equation}
Two distinct cases arise here.

\textbf{Case~1: non-overlapping edge sets.}\;
If the triples $(e_{I_1},e_{J_1},e_{K_1})$ and
$(e_{I_2},e_{J_2},e_{K_2})$ share no common edge, the tensor-product
structure~(\ref{multiedge}) of the multi-edge coherent state
\begin{equation}\label{multiedge}
|\psi^t_{\Gamma,\{g\}}\rangle
:=|\psi^t_{g_{e_1}}\rangle
  \otimes|\psi^t_{g_{e_2}}\rangle
  \otimes\cdots\otimes|\psi^t_{g_{e_n}}\rangle
\end{equation}
implies that Eq.~(\ref{psum123}) factorizes into single-edge matrix
elements.
Since both numerator and denominator contain exactly the same
collection of single-edge factors for each edge (with overlap functions
for unacted edges cancelling identically), the ratio evaluates to~$1$
exactly:
\begin{equation}
\begin{split}
&\frac{
  \langle\psi^t_{g_{e_{I_1}}}|\widehat{p}^{i_1}_{e_{I_1}}|
  \psi^t_{g'_{e_{I_1}}}\rangle\;
  \langle\psi^t_{g_{e_{J_1}}}|\widehat{p}^{j_1}_{e_{J_1}}|
  \psi^t_{g'_{e_{J_1}}}\rangle\;
  \langle\psi^t_{g_{e_{K_1}}}|\widehat{p}^{k_1}_{e_{K_1}}|
  \psi^t_{g'_{e_{K_1}}}\rangle\;
  \langle\psi^t_{g_{e_{I_2}}}|\widehat{p}^{i_2}_{e_{I_2}}|
  \psi^t_{g'_{e_{I_2}}}\rangle\;
  \langle\psi^t_{g_{e_{J_2}}}|\widehat{p}^{j_2}_{e_{J_2}}|
  \psi^t_{g'_{e_{J_2}}}\rangle\;
  \langle\psi^t_{g_{e_{K_2}}}|\widehat{p}^{k_2}_{e_{K_2}}|
  \psi^t_{g'_{e_{K_2}}}\rangle}
{
  \langle\psi^t_{g_{e_{I_1}}}|\widehat{p}^{i_1}_{e_{I_1}}|
  \psi^t_{g'_{e_{I_1}}}\rangle\;
  \langle\psi^t_{g_{e_{J_1}}}|\widehat{p}^{j_1}_{e_{J_1}}|
  \psi^t_{g'_{e_{J_1}}}\rangle\;
  \langle\psi^t_{g_{e_{K_1}}}|\widehat{p}^{k_1}_{e_{K_1}}|
  \psi^t_{g'_{e_{K_1}}}\rangle\;
  \langle\psi^t_{g_{e_{I_2}}}|\widehat{p}^{i_2}_{e_{I_2}}|
  \psi^t_{g'_{e_{I_2}}}\rangle\;
  \langle\psi^t_{g_{e_{J_2}}}|\widehat{p}^{j_2}_{e_{J_2}}|
  \psi^t_{g'_{e_{J_2}}}\rangle\;
  \langle\psi^t_{g_{e_{K_2}}}|\widehat{p}^{k_2}_{e_{K_2}}|
  \psi^t_{g'_{e_{K_2}}}\rangle}\\
&\quad=1.
\end{split}
\end{equation}

\textbf{Case~2: overlapping edge sets.}\;
Suppose, for instance, $I_1=I_2$ while $J_1\neq J_2$ and
$K_1\neq K_2$.
The tensor-product factorization now yields
\begin{equation}
\begin{split}
&\frac{
  \langle\psi^t_{g_{e_{I_1}}}|
    \widehat{p}^{i_1}_{e_{I_1}}\widehat{p}^{i_2}_{e_{I_1}}
  |\psi^t_{g'_{e_{I_1}}}\rangle\;
  \langle\psi^t_{g_{e_{I_1}}}|\psi^t_{g'_{e_{I_1}}}\rangle\;
  \langle\psi^t_{g_{e_{J_1}}}|\widehat{p}^{j_1}_{e_{J_1}}|
  \psi^t_{g'_{e_{J_1}}}\rangle\;
  \langle\psi^t_{g_{e_{K_1}}}|\widehat{p}^{k_1}_{e_{K_1}}|
  \psi^t_{g'_{e_{K_1}}}\rangle\;
  \langle\psi^t_{g_{e_{J_2}}}|\widehat{p}^{j_2}_{e_{J_2}}|
  \psi^t_{g'_{e_{J_2}}}\rangle\;
  \langle\psi^t_{g_{e_{K_2}}}|\widehat{p}^{k_2}_{e_{K_2}}|
  \psi^t_{g'_{e_{K_2}}}\rangle
}
{
  \langle\psi^t_{g_{e_{I_1}}}|\widehat{p}^{i_1}_{e_{I_1}}|
  \psi^t_{g'_{e_{I_1}}}\rangle\;
  \langle\psi^t_{g_{e_{I_1}}}|\widehat{p}^{i_2}_{e_{I_1}}|
  \psi^t_{g'_{e_{I_1}}}\rangle\;
  \langle\psi^t_{g_{e_{J_1}}}|\widehat{p}^{j_1}_{e_{J_1}}|
  \psi^t_{g'_{e_{J_1}}}\rangle\;
  \langle\psi^t_{g_{e_{K_1}}}|\widehat{p}^{k_1}_{e_{K_1}}|
  \psi^t_{g'_{e_{K_1}}}\rangle\;
  \langle\psi^t_{g_{e_{J_2}}}|\widehat{p}^{j_2}_{e_{J_2}}|
  \psi^t_{g'_{e_{J_2}}}\rangle\;
  \langle\psi^t_{g_{e_{K_2}}}|\widehat{p}^{k_2}_{e_{K_2}}|
  \psi^t_{g'_{e_{K_2}}}\rangle
}\\[6pt]
&\quad=
\frac{\langle\psi^t_{g_{e_{I_1}}}|
        \widehat{p}^{i_1}_{e_{I_1}}\widehat{p}^{i_2}_{e_{I_1}}
      |\psi^t_{g'_{e_{I_1}}}\rangle\;
      \langle\psi^t_{g_{e_{I_1}}}|\psi^t_{g'_{e_{I_1}}}\rangle}
     {\langle\psi^t_{g_{e_{I_1}}}|\widehat{p}^{i_1}_{e_{I_1}}|
      \psi^t_{g'_{e_{I_1}}}\rangle\;
      \langle\psi^t_{g_{e_{I_1}}}|\widehat{p}^{i_2}_{e_{I_1}}|
      \psi^t_{g'_{e_{I_1}}}\rangle}
=1+\mathcal{O}(t),
\end{split}
\end{equation}
where the last step follows from Eq.~(\ref{respe}).
All remaining overlap patterns reduce analogously, giving
$1+\mathcal{O}(t)$ in every case.
By the algebraic identity~(\ref{sum_alphak}), the full ratio
(\ref{multiedgeQ}) satisfies
\begin{equation}\label{resQeF}
\frac{\langle\psi^t_{\Gamma,\{g\}}|\psi^t_{\Gamma,\{g'\}}\rangle\;
      \langle\psi^t_{\Gamma,\{g\}}|\widehat{Q}_v^2|
      \psi^t_{\Gamma,\{g'\}}\rangle}
     {\langle\psi^t_{\Gamma,\{g\}}|\widehat{Q}_v|
      \psi^t_{\Gamma,\{g'\}}\rangle^2}
=1+\mathcal{O}(t),
\end{equation}
which establishes condition~(1).

\subsubsection*{Verification of condition~(2)}

To verify condition~(2), we examine the effective action governing the
saddle-point expansion of Eq.~(\ref{intF}) in the LQG setting.
The relevant action reads
\begin{equation}\label{LQGaction}
S[g_c,g,g']
:=\frac{1}{t}\left(K(g,g_c)+K(g_c,g')\right)
 =\frac{1}{t}\left(
   m(g,g_c)^2+m(g_c,g')^2
   -\tfrac{1}{2}p(g)^2-\tfrac{1}{2}p(g')^2-p(g_c)^2
 \right),
\end{equation}
where $K$ represents the K\"ahler potential and $p(g):=m(g,g)$.
Geometrically, on the cotangent bundle
$T^*SU(2)\cong SL(2,\mathbb{C})$, the function $K(a,b)$ is a
K\"ahler potential representing a squared geodesic distance.
Variation of $S$ constrains the critical point $g_c$ to lie on the
classical geodesic connecting $g$ and $g'$ in phase space.

Since the base manifold $SU(2)$ is compact, infinitely many geodesics
connect any pair of points.
However, the exponential weight $e^{-S/t}$ ensures that only the
\emph{shortest} geodesic contributes at leading order;
all longer geodesics are exponentially suppressed.
Moreover, within the coherent-state path integral each time slice is thin, so that $g$ and $g'$ lie in a sufficiently small neighborhood in phase space, one can then  prove there exists a single, isolated,
non-degenerate critical point \cite{Han:2019vpw}.

\subsubsection*{Extension to gauge-invariant coherent states}

The expansion extends to the gauge-invariant sector.
The gauge-invariant coherent state is
\begin{equation}
|\Psi^t_{\Gamma,\{g\}}\rangle
=P_{\{h\}}\,\psi^t_{\Gamma,\{g\}}(\{h\}),
\end{equation}
where $P_{\{h\}}$ is the projection operator implementing the $SU(2)$
group averaging of Eq.~(\ref{groupA}).
The key algebraic properties are
\begin{equation}\label{P_prop}
\begin{split}
&\langle P_{\{h\}}\psi^t_{\Gamma,\{g\}},\,
         P_{\{h\}}\psi^t_{\Gamma,\{g'\}}\rangle
=\langle\psi^t_{\Gamma,\{g\}},\,
         P_{\{h\}}\psi^t_{\Gamma,\{g'\}}\rangle,\\
&[P_{\{h\}},\widehat{Q}_v]=0,
\qquad
 [P_{\{h\}},\widehat{Q}_v^n]=0,
\end{split}
\end{equation}
the commutativity following from the gauge invariance of
$\widehat{Q}_v$.
Using these properties, the variance of $\widehat{Q}_v^2$ in the
gauge-invariant representation evaluates as
\begin{equation}
\begin{split}
&\langle\Psi^t_{\Gamma,\{g\}}|
\left[
  \widehat{Q}_v^2
  -\left(
    \frac{\langle\Psi^t_{\Gamma,\{g\}}|\widehat{Q}_v|
          \Psi^t_{\Gamma,\{g'\}}\rangle}
         {\langle\Psi^t_{\Gamma,\{g\}}|
          \Psi^t_{\Gamma,\{g'\}}\rangle}
  \right)^{\!2}
\right]
|\Psi^t_{\Gamma,\{g'\}}\rangle\\
={}&
\int dh\;\langle\psi^t_{\Gamma,\{g\}}|
\left[
  \widehat{Q}_v^2
  -\left(
    \frac{\langle\Psi^t_{\Gamma,\{g\}}|\widehat{Q}_v|
          \Psi^t_{\Gamma,\{g'\}}\rangle}
         {\langle\Psi^t_{\Gamma,\{g\}}|
          \Psi^t_{\Gamma,\{g'\}}\rangle}
  \right)^{\!2}
\right]
|\psi^t_{\Gamma,\alpha_{\{h\}}(\{g'\})}\rangle\\
\sim{}&
\operatorname{Pol}[\{g\},\{g'\}]\;
\langle\psi^t_{\Gamma,\{g\}}|
\left[
  \widehat{Q}_v^2
  -\left(
    \frac{\langle\psi^t_{\Gamma,\{g\}}|\widehat{Q}_v|
          \psi^t_{\Gamma,\alpha_{\{h_c\}}(\{g'\})}\rangle}
         {\langle\psi^t_{\Gamma,\{g\}}|
          \psi^t_{\Gamma,\alpha_{\{h_c\}}(\{g'\})}\rangle}
  \right)^{\!2}
\right]
|\psi^t_{\Gamma,\alpha_{\{h_c\}}(\{g'\})}\rangle
\left(1+\mathcal{O}(t)\right)\\
={}&
t\,\operatorname{Pol}[\{g\},\{g'\}]\;
\langle\psi^t_{\Gamma,\{g\}}|
\psi^t_{\Gamma,\alpha_{\{h_c\}}(\{g'\})}\rangle
\left(1+\mathcal{O}(t)\right)\\
={}&
t\,\operatorname{Pol}[\{g\},\{g'\}]\;
\langle\Psi^t_{\Gamma,\{g\}}|
\Psi^t_{\Gamma,\{g'\}}\rangle
+\mathcal{O}(t^2),
\end{split}
\end{equation}
where in the second step we performed saddle point approximation. The third step is due to Eq. (\ref{resQe}). And the last step is achieved by taking into account of the saddle point approximation of $\langle\Psi^t_{\Gamma,\{g\}}|
\Psi^t_{\Gamma,\{g'\}}\rangle$. $\alpha_{\{h\}}$ denotes the gauge transformation in the integrand of
Eq.~(\ref{groupA}), and $h_c$ is the dominant saddle point of the
gauge-group integration and $\operatorname{Pol}[\{g\},\{g'\}]$ is a polynomial in the group elements independent of $t$. 

By the same token, the higher-order relation analogous to
Eq.~(\ref{Q4_new}) holds:
\begin{equation}
\begin{split}
&\langle\Psi^t_{\Gamma,\{g\}}|
\left(
  \widehat{Q}_v^4
  -\frac{2\langle\Psi^t_{\Gamma,\{g\}}|\widehat{Q}_v|
         \Psi^t_{\Gamma,\{g'\}}\rangle^2\;\widehat{Q}_v^2}
        {\langle\Psi^t_{\Gamma,\{g\}}|
         \Psi^t_{\Gamma,\{g'\}}\rangle^2}
  +\frac{\langle\Psi^t_{\Gamma,\{g\}}|\widehat{Q}_v|
         \Psi^t_{\Gamma,\{g'\}}\rangle^4}
        {\langle\Psi^t_{\Gamma,\{g\}}|
         \Psi^t_{\Gamma,\{g'\}}\rangle^4}
\right)
|\Psi^t_{\Gamma,\{g'\}}\rangle\\
={}&
t\,\operatorname{Pol}[\{g\},\{g'\}]\;
\langle\Psi^t_{\Gamma,\{g\}}|
\Psi^t_{\Gamma,\{g'\}}\rangle
+\mathcal{O}(t^2),
\end{split}
\end{equation}
and the inductive assumption~(\ref{asumpO}) together with the
splitting~(\ref{ON1}) in lemma \ref{theoremAsymp} carry over to the gauge-invariant sector by
virtue of Eq.~(\ref{P_prop}).

\subsubsection*{Holonomy operators}

The quantum holonomy $\widehat{h}_{AB}$ can be expressed in terms
of the complexified group element and the flux operator
\cite{Thiemann:2000ca}:
\begin{equation}\label{hquantum}
\widehat{h}_{AB}
=e^{-3t/8}\left(e^{i\widehat{p}_j\tau_j/2}\,\widehat{g}\right)_{AB}.
\end{equation}
For monomials of holonomy operators,
\begin{equation}\label{hexpand}
\langle\psi^t_g|
  \widehat{h}_{A_1 B_1}\cdots\widehat{h}_{A_n B_n}
|\psi^t_{g'}\rangle
=e^{-3t/8}
\sum_C g'_{C B_n}\,
\langle\psi^t_g|
  \widehat{h}_{A_1 B_1}\cdots\widehat{h}_{A_{n-1}B_{n-1}}\,
  \left(e^{i\widehat{p}_j\tau_j/2}\right)_{A_n C}
|\psi^t_{g'}\rangle.
\end{equation}
The commutation relation
$[\widehat{p}_j,\widehat{h}_{AB}]
=\tfrac{it}{2}(\tau_j\widehat{h})_{AB}$
gives
\begin{equation}
\left[
  \tfrac{i}{2}\textstyle\sum_j\widehat{p}_j\tau_j,\;
  \widehat{h}_{A_{n-1}B_{n-1}}
\right]
=\tfrac{i}{2}\sum_j
 [\widehat{p}_j\tau_j,\widehat{h}_{A_{n-1}B_{n-1}}]
=-\tfrac{t}{4}\left(\textstyle\sum_j\tau_j^2\right)
 \widehat{h}_{A_{n-1}B_{n-1}}
=\tfrac{3t}{4}\,\widehat{h}_{A_{n-1}B_{n-1}},
\end{equation}
where we used $\sum_j\tau_j^2=-3\cdot\mathbf{1}$.
Exponentiating via the Baker--Campbell--Hausdorff formula,
\begin{equation}
\begin{split}
\left(e^{i\widehat{p}_j\tau_j/2}\right)_{A_n C}\,
\widehat{h}_{A_{n-1}B_{n-1}}\,
\left(e^{-i\widehat{p}_j\tau_j/2}\right)_{A_n C}
&=\widehat{h}_{A_{n-1}B_{n-1}}
  +\tfrac{3t}{4}\widehat{h}_{A_{n-1}B_{n-1}}
  +\tfrac{1}{2!}\left(\tfrac{3t}{4}\right)^2
   \widehat{h}_{A_{n-1}B_{n-1}}+\cdots\\
&=e^{3t/4}\,\widehat{h}_{A_{n-1}B_{n-1}},
\end{split}
\end{equation}
so that
\begin{equation}
\left(e^{i\widehat{p}_j\tau_j/2}\right)_{A_n C}\,
\widehat{h}_{A_{n-1}B_{n-1}}
=e^{3t/4}\,\widehat{h}_{A_{n-1}B_{n-1}}\,
\left(e^{i\widehat{p}_j\tau_j/2}\right)_{A_n C}.
\end{equation}
Iterating, one obtains the final factorization
\begin{equation}\label{hexpandF}
\langle\psi^t_g|
  \widehat{h}_{A_1 B_1}\cdots\widehat{h}_{A_n B_n}
|\psi^t_{g'}\rangle
=e^{-3n^2 t/8}
\sum_{C_1,\dots,C_n}
g'_{C_1 B_1}\cdots g'_{C_n B_n}\;
\langle\psi^t_g|
  \left(e^{i\widehat{p}_j\tau_j/2}\right)_{A_1 C_1}
  \cdots
  \left(e^{i\widehat{p}_j\tau_j/2}\right)_{A_n C_n}
|\psi^t_{g'}\rangle.
\end{equation}
Consequently, polynomials of the holonomy operator can in
principle be computed following the methods of
\cite{Thiemann:2000bx}.

\subsection{Numerical Comparison}\label{sec:LQGN}

In this section we compare the analytical expansion formula proposed
in this work with the old expansion in LQG, as well as with numerical
results obtained using the algorithm developed
in~\cite{Liliu:202603,LQG-Volume}.
We first introduce the numerical algorithm and the specific setup used
in the computation.
We then compare the expansion coefficients of the two formulas for a
gauge-variant 3-bridges graph and examine how their discrepancy scales
with the relative angle~$\theta$ between the coherent states.
Finally, we validate the accuracy of the proposed expansion by
contrasting these analytical predictions with numerical data.

\subsubsection{Setup}

The underlying program is implemented in the Julia environment and
makes extensive use of parallel computing (typically 40--80 CPU cores).
The matrix elements of the volume operator acting on (normalized)
gauge-invariant coherent states are computed as
\begin{equation}\label{sum111}
    \begin{split}
        \langle \Psi^t_{\Gamma,\{g\}}|\widehat{V}_{v}|
        \Psi^t_{\Gamma,\{g'\}} \rangle
        =\sum\limits_{\{\vec{j}\}}\Bigg(
        \sum\limits_{\lambda,\lambda'}
        \sum\limits_{\{\vec{\iota}\},\{\vec{\iota}'\}}
        &\langle \Psi^t_{\Gamma,\{g\}}|
        \Gamma,\{\vec{\iota}\},\{\vec{j}\}\rangle
        \langle\Gamma,\{\vec{\iota}\},\{\vec{j}\}|\lambda\rangle
        \langle\lambda|\widehat{V}_v|\lambda'\rangle\\
        &\times\langle\lambda'|\Gamma,\{\vec{\iota}'\},\{\vec{j}\}\rangle
        \langle\Gamma,\{\vec{\iota}'\},\{\vec{j}\}|
        \Psi^t_{\Gamma,\{g'\}} \rangle\Bigg),
    \end{split}
\end{equation}
while the corresponding expression for gauge-variant coherent states
reads
\begin{equation}\label{sum333}
    \begin{split}
        \langle\psi^t_{\Gamma,\{g\}}|\widehat{V}_v|
        \psi^t_{\Gamma,\{g'\}} \rangle
        =\sum\limits_{\{\vec{j}\},\{M\},\{M'\}}\Bigg(
        \sum\limits_{\lambda,\lambda'}
        \sum\limits_{\{\vec{\iota}\},\{\vec{\iota}'\}}
        &\langle \psi^t_{\Gamma,\{g\}}|
        \Gamma,\{\vec{\iota}\},\{\vec{j}\},\{M\}\rangle
        \langle\Gamma,\{\vec{\iota}\},\{\vec{j}\},\{M\}|
        \lambda\rangle\\
        &\times\langle\lambda|\widehat{V}_v|\lambda'\rangle
        \langle\lambda'|\{\vec{\iota}'\},\{\vec{j}\},\{M'\}\rangle
        \langle\{\vec{\iota}'\},\{\vec{j}\},\{M'\}|
        \psi^t_{\Gamma,\{g'\}} \rangle\Bigg),
    \end{split}
\end{equation}
where $|\lambda\rangle$ and $|\lambda'\rangle$ are eigenvectors of
$\widehat{V}_v$, and
$\langle\lambda'|\{\vec{\iota}'\},\{\vec{j}\},\{M'\}\rangle$ is the
transformation matrix between the eigenbasis and the spin-network
basis.
Since $\langle\lambda|\widehat{V}_v|\lambda'\rangle$ is diagonal, the
square root of~$\widehat{Q}_v$ can be computed straightforwardly.
The expectation value of~$\widehat{V}$ on
$|\tilde{\Psi}^t_{\Gamma,\{g\}}\rangle$ is obtained by setting
$\{g'\}=\{g\}$ in the above equations.
Here $V_{v}=\frac{1}{8}\left(Q_{v}^2\right)^{1/4}$, where the
coefficient~$\frac{1}{8}$ ensures the correct correspondence with the
classical volume in LQG (see, e.g.,
\cite{Han:2019vpw}~Eq.~(7.1)).
The computation of the volume-operator matrix elements on
spin-network states is detailed
in~\cite{Brunnemann:2004xi}, and the transformation matrix between
the spin-network and coherent-state representations can be found
in~\cite{PhysRevD.92.104023}.

\begin{figure}[h!]
 \centering
 \includegraphics[height=5cm]{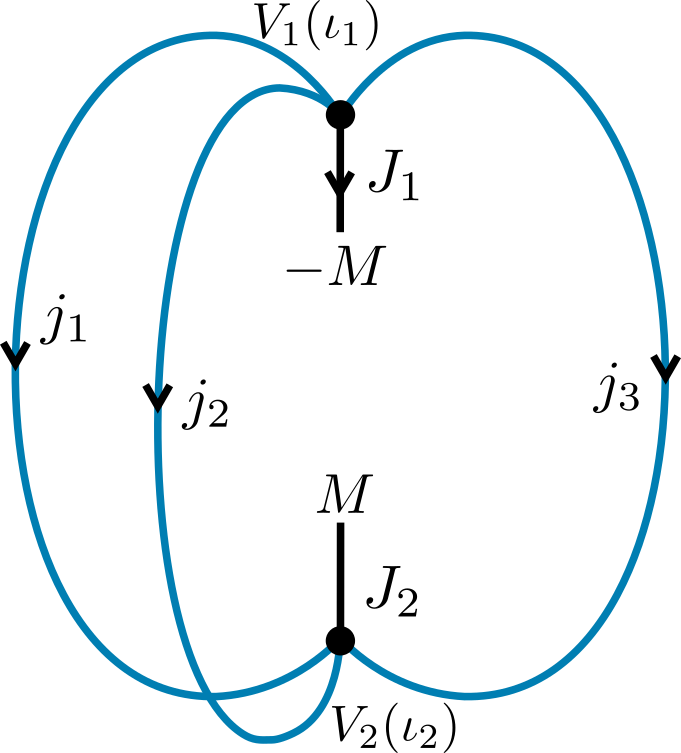}
 \caption{Spin-network structure of a gauge-variant
          3-bridges graph.}\label{fig3-1}
\end{figure}

We now describe the specific setup used to compare the semiclassical
expansions with the numerical results.
A gauge-variant 3-bridges graph is selected (see
Fig.~\ref{fig3-1}, with vertices~$V_1$ and~$V_2$).
Two sets of $SL(2,\mathbb{C})$ elements are assigned to the three
edges, denoted~$\{g\}$ and~$\{g'\}$.
The set~$\{g'\}$ is defined by $\vec{z}'_I=0$,
$\vec{p}'_1=(6\cos\theta,\,6\sin\theta,\,0)$,
$\vec{p}'_2=(0,\,6,\,0)$, and $\vec{p}'_3=(0,\,0,\,6)$;
the set~$\{g\}$ is defined by $\vec{z}_I=0$,
$\vec{p}_1=(6,\,0,\,0)$, $\vec{p}_2=(0,\,6,\,0)$, and
$\vec{p}_3=(0,\,0,\,6)$.
With these data one constructs two Thiemann coherent states
$\psi^t_{\Gamma,\{g\}}$ and $\psi^t_{\Gamma,\{g'\}}$, and the
matrix elements of~$\widehat{V}_{V_1}$ can then be evaluated both
analytically and numerically.

\subsubsection{Comparison of the two expansions}

We now test the two operator-expansion formulas: the ``old'' expansion
Eq.~(\ref{Vexpand}), originally introduced
in~\cite{Giesel:2006um}, and the ``new'' expansion
Eq.~(\ref{VexpandNew}), proposed in the present work.
Both formulas yield the matrix element of~$\widehat{V}_{V_1}$ for
gauge-variant 3-bridges.

Table~\ref{tab3mat} compares the old and new expansion formulas for
the quantity
$\bigl|\langle\psi^t_{\Gamma,\{g\}}|\widehat{V}_{V_1}|
\psi^t_{\Gamma,\{g'\}}\rangle\big/
\langle\psi^t_{\Gamma,\{g\}}|\psi^t_{\Gamma,\{g'\}}\rangle\bigr|$
up to order~$t^2$.
Here~$\theta$ denotes the angle between the label data of the first
edge in $|\psi^t_{\Gamma,\{g\}}\rangle$ and
$|\psi^t_{\Gamma,\{g'\}}\rangle$.
As the table shows, the two expansions already differ at zeroth order
and the discrepancy grows with~$\theta$, i.e.\ as the two coherent
states deviate further from each other.
The difference becomes appreciable only for large~$\theta$, where the
directions of the first edge in the two states become nearly
antiparallel.
Moreover, the difference is more pronounced for the $t^2$~coefficient
than for the $t^1$~coefficient, which in turn deviates more than the
$t^0$~term.
This indicates that higher-order coefficients are increasingly
sensitive to large~$\theta$; we expect terms beyond~$t^2$ to exhibit
the same trend.

\begin{table}[h!]
 \begin{center}
\footnotesize
\renewcommand{\arraystretch}{1.5}
\caption{Comparison of the old and new expansion formulas up to
$\mathcal{O}(t^2)$ for
$\bigl|\langle\psi^t_{\Gamma,\{g\}}|\widehat{V}_{V_1}|
\psi^t_{\Gamma,\{g'\}}\rangle\big/
\langle\psi^t_{\Gamma,\{g\}}|
\psi^t_{\Gamma,\{g'\}}\rangle\bigr|$.}\label{tab3mat}
\begin{tabular}{ |c||c|c| }
 \hline
 $\theta$ & Old expansion & New expansion\\
 \hline
 $30^{\circ}$
   & $1.8262 - 0.200433\, t - 0.00175893\, t^2$
   & $1.8262 - 0.200433\, t - 0.00178565\, t^2$\\
 \hline
 $60^{\circ}$
   & $1.79132 - 0.198826\, t - 0.00147641\, t^2$
   & $1.79132 - 0.198834\, t - 0.00184545\, t^2$\\
 \hline
 $90^{\circ}$
   & $1.72432 - 0.195579\, t - 0.000493382\, t^2$
   & $1.72431 - 0.19581\, t - 0.00199745\, t^2$\\
 \hline
 $120^{\circ}$
   & $1.60216 - 0.187737\, t + 0.00170764\, t^2$
   & $1.6018 - 0.190451\, t - 0.00244614\, t^2$\\
 \hline
 $150^{\circ}$
   & $1.34692 - 0.156606\, t + 0.00743893\, t^2$
   & $1.33168 - 0.179586\, t - 0.00496832\, t^2$\\
 \hline
\end{tabular}
\renewcommand{\arraystretch}{1}
\end{center}
\end{table}

To further validate the new formula, we have numerically evaluated the
quantity
$\bigl|\langle\psi^t_{\Gamma,\{g\}}|\widehat{V}_{V_1}|
\psi^t_{\Gamma,\{g'\}}\rangle\big/
\langle\psi^t_{\Gamma,\{g\}}|\psi^t_{\Gamma,\{g'\}}\rangle\bigr|$
for~$\theta=120^\circ$.
Fig.~\ref{V4matdiff1}\,(a) compares the numerically determined matrix
elements of~$\widehat{V}_{V_1}$ with both expansion formulas.
Near $t=0.65$, both the old and new expansions begin to converge
toward the numerical result, illustrating that the two expansions
closely coincide in the small-$t$ regime even at large~$\theta$.

A magnified view is given in Fig.~\ref{V4matdiff1}\,(b), restricted
to $0.6\leq t\leq 1$.
This confirms that the expansion proposed in this work,
Eq.~(\ref{VexpandNew}), is indeed more accurate.
Fig.~\ref{V4matdiff2} contrasts the relative discrepancies between
the numerical data and both expansion formulas.
The figure shows that the relative difference is noticeably smaller
for the new expansion when $t$ is small.
However, the numerical data still exhibit oscillatory behavior in the
interval $0.65\leq t\leq 3$, so the full convergence process has not
yet been resolved; doing so would likely require data reaching down
to~$t\leq 0.25$. This is because, as Table~\ref{tab3mat} shows, the two expansions
already differ at zeroth order (i.e.\ at $t=0$).
Consequently, their relative validity can be conclusively tested in
the small-$t$ regime.

In summary, Table~\ref{tab3mat} and Fig.~\ref{V4matdiff2} show that
when the two coherent states are close, both the old and new
expansions produce virtually the same approximation.
As the two states diverge, the results begin to differ
significantly---even at leading order---clearly exposing the
limitations of the old expansion.
With future improvements in algorithms and increased computational
resources, more precise results can be expected.

\begin{figure}[h!]
 \centering
 \begin{subfigure}[b]{0.45\textwidth}
   \centering
   \includegraphics[height=5cm]{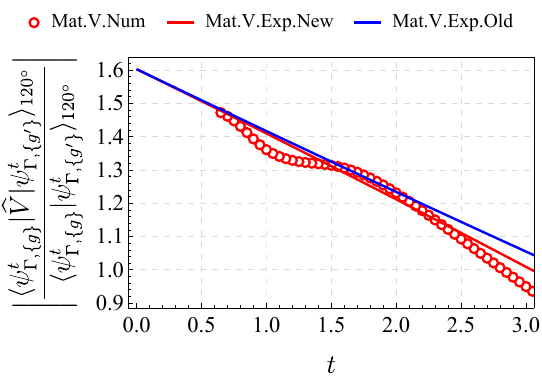}
   \caption{Matrix element}
 \end{subfigure}
 \begin{subfigure}[b]{0.45\textwidth}
   \centering
   \includegraphics[height=5cm]{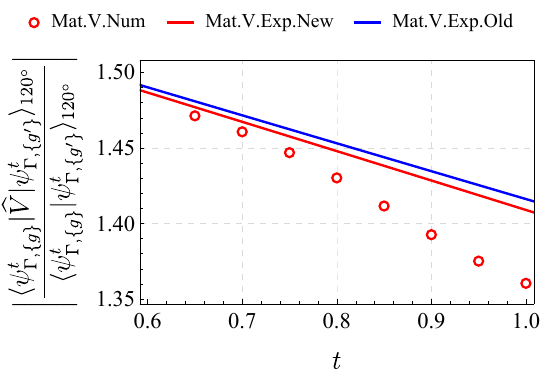}
   \caption{Zoom-in for $0.6\leq t\leq 1.0$}
 \end{subfigure}
 \caption{Berezin Symbol of $\widehat{V}_{V_1}$ for
          $\theta=120^{\circ}$.}\label{V4matdiff1}
\end{figure}

\begin{figure}[h!]
 \centering
 \includegraphics[height=5cm]{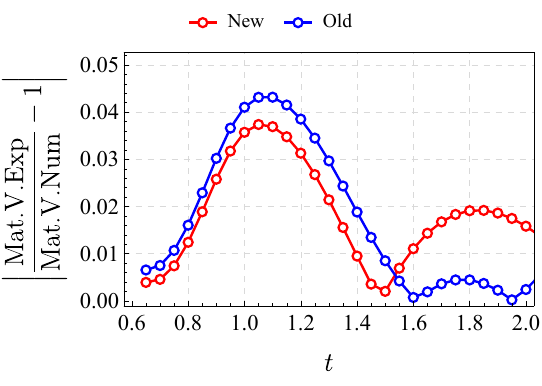}
 \caption{Relative discrepancy between the numerical data and the
          old/new expansion formulas for
          $\theta=120^\circ$.}\label{V4matdiff2}
\end{figure}

\section{Discussion}\label{sectionV}

In the present work, a novel expansion formula has been developed for the purpose of computing the matrix elements of non-polynomial operators within the coherent state representation of gauge theories. By utilizing saddle point approximation techniques in conjunction with the operator-algebraic properties of the matrix elements, we have successfully proven the asymptotic expansion formula and investigated the scope and the necessary requirements for its physical application. Based on the mechanism of our proof, it is highly universal and can be applied to most analytic non-polynomial functions $f(\widehat{A})$.

As a concrete and significant physical application, we have demonstrated that this general expansion formula correctly executes the expansion of non-polynomial functions of the flux operator—and consequently, the physical Hamiltonian operator—within the framework of canonical quantum gravity. By integrating our expansion formula with the established Thiemann coherent state techniques, the complex, non-perturbative Hamiltonian operator in LQG can be systematically expanded and evaluated. To demonstrate that the derived operator expansion captures the lower-order quantum fluctuations with sufficient accuracy, we have compared our novel expansion against both the existing Giesel-Thiemann expansion formula \cite{Giesel:2006um} and numerical results obtained independently in the deep quantum regime. The numerical evaluations are acquired by computing the matrix elements in the spin-network representation and projecting them onto the coherent state basis. The outcomes evidently show that, while the earlier expectation-value-based expansion is well-suited for formally defining the theory in the continuous-time limit (where the initial and final states heavily overlap), it begins to deviate from the true semiclassical phase once the discrepancy between the two coherent states increases (as is typical in finite-step lattice computations), primarily due to the lack of the full holomorphic structure. Conversely, our newly proposed expansion—by anchoring the quantum fluctuations to the true off-diagonal geometric saddle points connecting the states—maintains structural precision and correctly reproduces the numerical matrix elements, thereby illustrating the practical advantage of our formalism. 

The theoretical generality of our work possesses profound implications for quantum field theories. Although our framework operates on well-defined quantum operators initially constructed via standard Dirac quantization, the asymptotic evaluation of their Berezin symbols developed herein mirrors the algebraic properties of deformation quantization. By evaluating the operator action via the generalized gauge group coherent states (such as the Hall and Thiemann states), our expansion mechanism systematically generates the quantum corrections that could be analogous to the non-commutative star product. In principle, this analytical control over non-polynomial operators provides an useful tool for investigating non-perturbative fixed points and executing the systematic cancellation of quantum anomalies. Furthermore, as the coherent state path integral formulation depends on the evaluation of off-diagonal matrix elements on the lattice, our expansion formalism offers a possible route for the construction of lattice gauge path integrals and their subsequent analytical continuations.

Looking forward, the generalized nature of this expansion framework opens several promising avenues for future research. A first issue is to understand how the present operator expansion should be modified in degenerate sectors of the phase space, where the off-diagonal symbols may vanish and the current non-vanishing assumptions break down. Such configurations may be physically relevant in quantum gravity, and treating them consistently may require new regularization schemes or alternative expansion variables. Beyond this, because our methodology essentially relies on the analytic continuation of matrix elements across the complexified phase space, it naturally facilitates the study of instanton physics. Instantons and other tunneling phenomena require the evaluation of path integrals around multiple critical points within topologically non-trivial phase spaces; understanding how the present operator expansion adapts to quantum interference between distinct classical geodesics will be a subject of considerable interest. Additionally, we intend to explore the analytic continuation of physical parameters—such as the transition from real time to Euclidean time which defines the thermal partition function, or the further complexification of the fundamental gauge group. Such investigations are required for the proper definition of perturbation theory in strongly fluctuating, non-perturbative quantum systems. Finally, explicitly applying this fully expanded Hamiltonian operator to evaluate the transition amplitudes and the resulting renormalization flows and smooth continuum dynamics within the complete gauge coherent state path integral remains our ultimate objective.


\bibliographystyle{jhep}
\bibliography{sample}

\end{document}